\newtheorem{theorem}{Theorem}
\newtheorem{definition}{Definition}
\newtheorem{lemma}{Lemma}
\pgfplotsset{width = 7 cm, compat = 1.14}
\definecolor{mygreen}{RGB}{28,172,0} 
\definecolor{mylilas}{RGB}{170,55,241}
\definecolor{deepgreen}{rgb}{0,0.5,0}
\title{Simple Records Support Robust Indirect Reciprocity}
\author[1]{Daniel Clark}
\author[1]{Drew Fudenberg}
\author[1]{Alexander Wolitzky}
\affil[1]{Department of Economics, MIT, Cambridge, MA 02139, USA}
\date{\today}
\begin{document}

\maketitle

\setstretch{1.5}

\textbf{Indirect reciprocity is a foundational mechanism of human cooperation [\citenum{trivers1971evolution, Sugden1986TheWelfare, alexander1987biology, Kandori1992, Nowak1998EvolutionScoring, seabright2004company, nowak2005evolution, sigmund2010calculus, santos2018social, henrich2007humans, boyd1989evolution, bowles2013cooperative}].
Existing models of indirect reciprocity fail to robustly support social cooperation: image scoring models [\citenum{Nowak1998EvolutionScoring}]  fail to provide robust incentives, while social standing models [\citenum{Sugden1986TheWelfare, leimar2001evolution, Panchanathan2003AReciprocity, Sigmund2012MoralReciprocity}] are not informationally robust. Here we provide a new model of indirect reciprocity based on simple, decentralized \emph{records}: each individual's record depends on their own past behavior alone, and not on their partners' past  behavior or their partners' partners' past behavior. When social dilemmas exhibit a coordination motive (or \emph{strategic complementarity}), tolerant trigger strategies based on simple records can robustly support positive social cooperation and exhibit strong stability properties. In the opposite case of \emph{strategic substitutability}, positive social cooperation cannot be robustly supported. Thus, the strength of short-run coordination motives in social dilemmas determines the prospects for robust long-run cooperation.}

People (and perhaps also other animals) often trust each other to cooperate even when they know they will never meet again.  Such indirect reciprocity relies on individuals having some information about how their partners have behaved in the past. Existing models of indirect reciprocity fall into two paradigms. In the image scoring paradigm, each individual carries an \emph{image} that improves when they help others, and (at least some) individuals help only those with good images. In the standing paradigm, each individual carries a \emph{standing} that (typically) improves when they help others with good standing, but not when they help those with bad standing, and individuals with good standing help only other good-standing individuals.

Neither of these paradigms provides a robust explanation for social cooperation. In image-scoring models, there is no reason for an individual to only help  partners with good images: since the partner's image does not affect one's future payoff, helping some partners and not others is optimal only if one is completely indifferent between helping and not helping. In game-theoretic terms, individuals never have strict incentives to follow image-scoring strategies, and hence such strategies can form at best a weak equilibrium.
Closely related to this point, image-scoring equilibria are unstable in several environments [\citenum{leimar2001evolution, Panchanathan2003AReciprocity}].
Standing models do yield strict, stable equilibria, but they fail to be informationally robust: computing an individual's standing requires knowledge of not only their own past behavior, but also their past partners' behavior, their partners' partners' behavior, and so on ad infinitum. Such infinite-order information is likely unavailable in many societies [\citenum{nowak2005evolution}].

We develop a new theoretical paradigm for modeling indirect reciprocity that supports positive social cooperation as a strict, stable equilibrium while relying only on simple, \emph{individualistic} information: when two players meet, they observe each other's records and nothing else, and each individual's record depends only on their own past behavior. (Individualistic information is also called ``first-order'' [\citenum{Takahashi2010, Heller2018, bhaskar2018, ohtsuki2006leading}].)

As our model of individual interaction, we use the classic prisoner's dilemma (``PD'') with actions $C,D$ (``Cooperate,'' ``Defect'') and a standard payoff normalization, with $g,l > 0$ and $g - l < 1$---see the top panel in \textbf{Fig. \ref{Prisoner's Dilemma Figure}}. This canonical game can capture many two-sided interactions, such as business partnerships 
 [\citenum{klein1981role}], management of public resources 
 [\citenum{hardin1968tragedy, ostrom1990governing}], and risk-sharing in developing societies [\citenum{coate1993reciprocity}],
 as well as many well-documented animal behaviors [\citenum{dugatkin1997cooperation}]. 
 
\definecolor{darkpastelgreen}{rgb}{0.01, 0.75, 0.24}

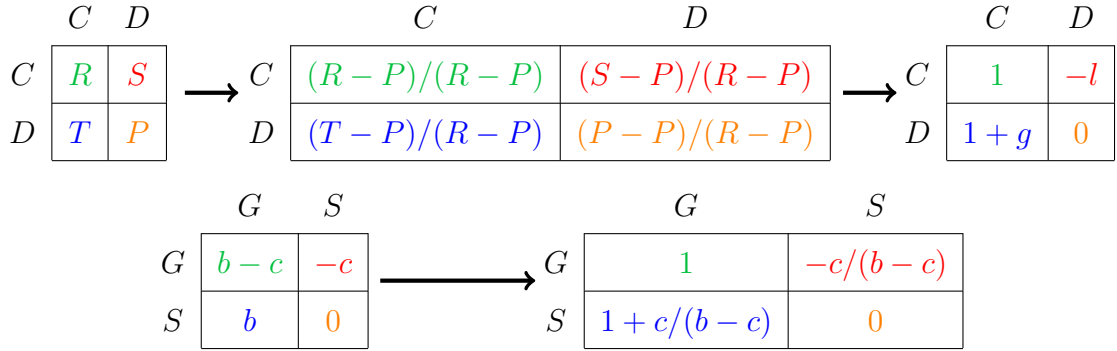
\begin{figure}[H]
\centering
\begin{tikzpicture}
\node(a) at (0,1.5){
\begin{tikzpicture}
\node (a) at (-9,0){
\begin{tabular}{ccc}
& $C$ & $D$ \\ \cline{2-3}
$C$ & \multicolumn{1}{|c}{\textcolor{darkpastelgreen}{$R$}} & \multicolumn{1}{|c|}{\textcolor{red}{$S$}} \\ 
\cline{2-3}
$D$ & \multicolumn{1}{|c}{\textcolor{blue}{$T$}} & \multicolumn{1}{|c|}{\textcolor{orange}{$P$}} \\ 
\cline{2-3}
\end{tabular}
};
\node(b) at (-3,0){
\begin{tabular}{ccc}
& $C$ & $D$ \\ \cline{2-3}
$C$ & \multicolumn{1}{|c}{\textcolor{darkpastelgreen}{$(R-P)/(R-P)$}} & \multicolumn{1}{|c|}{\textcolor{red}{$(S-P)/(R-P)$}} \\ 
\cline{2-3}
$D$ & \multicolumn{1}{|c}{\textcolor{blue}{$(T-P)/(R-P)$}} & \multicolumn{1}{|c|}{\textcolor{orange}{$(P-P)/(R-P)$}} \\ 
\cline{2-3}
\end{tabular}
};
\node(c) at (3.25,0){
\begin{tabular}{ccc}
& $C$ & $D$ \\ \cline{2-3}
$C$ & \multicolumn{1}{|c}{\textcolor{darkpastelgreen}{$1$}} & \multicolumn{1}{|c|}{\textcolor{red}{$-l$}} \\ 
\cline{2-3}
$D$ & \multicolumn{1}{|c}{\textcolor{blue}{$1+g$}} & \multicolumn{1}{|c|}{\textcolor{orange}{$0$}} \\ 
\cline{2-3}
\end{tabular}
};
\node(d) at (-7.75,-.25){};
\node(e) at (-6.75,-.25){};
\node(f) at (1,-.25){};
\node(g) at (2,-.25){};
\draw[->,ultra thick](d)--(e);
\draw[->,ultra thick](f)--(g);
\end{tikzpicture}
};
\node(b) at (0,-1){
\begin{tikzpicture}
\node (a) at (-4,0){
\begin{tabular}{ccc}
& $G$ & $S$ \\ \cline{2-3}
$G$ & \multicolumn{1}{|c}{\textcolor{darkpastelgreen}{$b-c$}} & \multicolumn{1}{|c|}{\textcolor{red}{$-c$}} \\ 
\cline{2-3}
$S$ & \multicolumn{1}{|c}{\textcolor{blue}{$b$}} & \multicolumn{1}{|c|}{\textcolor{orange}{$0$}} \\ 
\cline{2-3}
\end{tabular}
};
\node (b) at (2.5,0) 
{
\begin{tabular}{ccc}
& $G$ & $S$ \\ \cline{2-3}
$G$ & \multicolumn{1}{|c}{\textcolor{darkpastelgreen}{$1$}} & \multicolumn{1}{|c|}{\textcolor{red}{$-c/(b - c)$}} \\ 
\cline{2-3}
$S$ & \multicolumn{1}{|c}{\textcolor{blue}{$1+c/(b-c)$}} & \multicolumn{1}{|c|}{\textcolor{orange}{$0$}} \\ 
\cline{2-3}
\end{tabular}
};
\node(c) at (-2.5,-.25){};
\node(d) at (-.15,-.25){};
\draw[->,ultra thick](c)--(d);
\end{tikzpicture}
};
\end{tikzpicture}
\caption{\textbf{The prisoner's dilemma.} The top panel shows how any prisoner's dilemma can be represented by the standard normalization with $g = (T - R) / (R - P)$ and $l = (P - S) / (R - P)$, where $T>R>P>S$. The bottom panel illustrates this normalization for ``donation games'' in which choosing $G$ ($Give$) instead of $S$ ($Shirk$) incurs a personal cost $c$ and gives benefit $b>c$ to the opponent.}
\label{Prisoner's Dilemma Figure}
\end{figure}

A critical feature of the PD is whether it exhibits \emph{strategic complementarity} or \emph{strategic substitutability}. Strategic complementarity means that the gain from playing $D$ is greater when the opponent also plays $D$. In the PD payoff matrix displayed in \textbf{Fig. \ref{Prisoner's Dilemma Figure}}, this corresponds to the condition \begin{equation} g<l. \tag{\textbf{Strategic Complementarity}} \end{equation} Strategic complementarity is a common case in realistic social dilemmas: it implies that although $D$ is selfishly optimal regardless of the partner's action  (a defining feature of the PD), the social dilemma nonetheless retains some aspect of a coordination or stag-hunt game, so that playing $C$ is less costly when one's partner also plays $C$. For example, mobbing a predator is always risky (hence costly) for each individual, but it is much less risky when others also mob [\citenum{zahavi1995altruism}]. The opposite case of \emph{strategic substitutability} arises when the gain from playing $D$ is greater when the opponent plays $C$: mathematically, this occurs when \[g>l.\tag{\textbf{Strategic Substitutability}} \] The distinction between strategic complementarity and substitutability has long been known to be of critical importance in economics [\citenum{bulow1985multimarket, fudenberg1984fat}], 
but its implications for the evolution of cooperation have not previously been assessed.

In our model, each player's record is an integer that tracks how many times that player has defected. Newborn players have record $0$. Whenever an individual plays $D$, their record increases by $1$. Whenever an individual plays $C$, their record remains constant with probability $1-\varepsilon$ and increases by $1$ with probability $\varepsilon$; thus, $\varepsilon \in (0,1)$ measures the amount of noise in the system, which can reflect either errors in recording or errors in executing the intended action. An individual's record is considered to be ``good'' if the number of times the individual has been recorded as playing $D$ is less than some pre-specified threshold $K$: this individualistic scoring is similar to image-scoring models. When two individuals meet, they both play $C$ if and only if they both have good records: this conditioning on the partner's record is similar to standing models.
 We call these strategies \emph{tolerant trigger strategies} or $GrimK$, as they are a form of the well-known grim trigger strategies [\citenum{friedmand1971supergames}] with a ``tolerance'' of $K$ recorded plays of $D$.

We analyze the steady-state equilibria of a system where the total population size is constant, but each individual has a geometrically distributed lifetime [\citenum{ohtsuki2006leading}]. To ensure robustness, we insist that equilibrium behavior is strictly optimal at every record; in classical (normal-form) games, this implies that the equilibrium is evolutionarily stable [\citenum{smith1982evolution, weibull1997evolutionary}].

We show that $GrimK$ strategies can form a strict steady-state equilibrium if and only if the PD exhibits substantial \emph{strategic complementarity}, in that the gain from playing $D$ rather than $C$ is significantly greater when the opponent plays $D$: the precise condition required in the PD payoff matrix displayed in \textbf{Fig. \ref{Prisoner's Dilemma Figure}} is \[g<\frac{l}{1+l}.\] 

Most previous studies of indirect reciprocity restrict attention to the ``donation game'' instance of the PD [\citenum{sigmund2010calculus}] where $g=l$---see the bottom panel in \textbf{Fig. \ref{Prisoner's Dilemma Figure}}. Our analysis reveals this to be a knife-edge case that obscures the distinction between strategic complementarity ($g<l$) and substitutability ($g>l$). (However, the $g\neq l$ case has also received significant attention: for example, the seminal article of Axelrod and Hamilton [\citenum{Axelrod1981TheCooperation}] took $g=1$ and $l=1/2$.) We show that the tolerance level $K$ can be tuned so that $GrimK$ strategies robustly support positive social cooperation in the presence of sufficiently strong strategic complementarity. 

To see how to tune the threshold $K$, note that since even  individuals who always try to cooperate are sometimes recorded as playing $D$ due to noise, $K$ must be large enough that the steady-state share of the population with good records is sufficiently high: with any fixed value of $K$, a population of sufficiently long-lived players would almost all have bad records. However, $K$ also cannot be too high, as otherwise an individual with a very good record (that is, a very low number of $D$'s) can safely play $D$ until their record approaches the threshold. Another constraint is that an individual with record $K-1$ who meets a partner with a bad record must not be tempted to deviate to $C$ to preserve their own good record. These constraints lead to an upper bound on the maximum share of cooperators in equilibrium. As lifetimes become long and noise becomes small, this upper bound converges to $0$ whenever $g > l / (1 + l)$ and to $l / (1 + l)$ whenever $g < l/(1 + l)$---see Table \ref{Upper Bounds on Cooperation Table}---and we show that this share of cooperators can in fact be attained in equilibrium in the $(\gamma, \varepsilon) \rightarrow (1,0)$ limit. Thus, greater strategic complementarity (higher $l$ and lower $g$) not only helps support some cooperation; it also increases the maximum level of cooperation in the limit, as shown in \textbf{Fig. \ref{Limit Performance Figure}}.

\begin{table}[H]
\centering
\begin{tikzpicture}
\node(a) at (-1.5,-.25){
\includegraphics[scale=1]{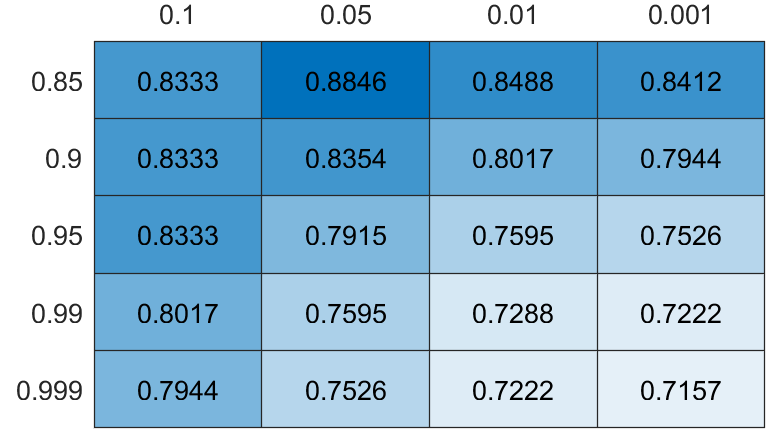}
};
\node(b) at (6,-.25){
\includegraphics[scale=1]{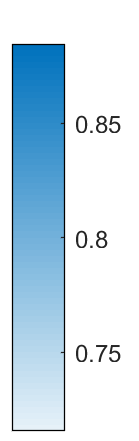}
};
\node[rotate=90](c) at (-7,-.35){Survival Probability ($\gamma$)};
\node(d) at (-1,3.25){Noise ($\varepsilon$)};
\node[rotate=90](c) at (5,-.35){Level of Cooperation};
\end{tikzpicture}
\caption{\textbf{Upper bounds on cooperation.} The entries are upper bounds on the share of cooperators possible in equilibria for various $\gamma$ and $\varepsilon$ values when $g = 0.5$ and $l = 2.5$, with a darker shade indicating a higher value as shown in the scale at right. As we move to the bottom right, the upper bound converges to $l/(1+l) \approx .7143$, which is the maximum share of cooperators sustainable in the limit,  but away from the limit the upper bound can be different (the values in this table are all higher, but this is not the case for small $\gamma$ or large $\varepsilon$).}
\label{Upper Bounds on Cooperation Table}
\end{table}

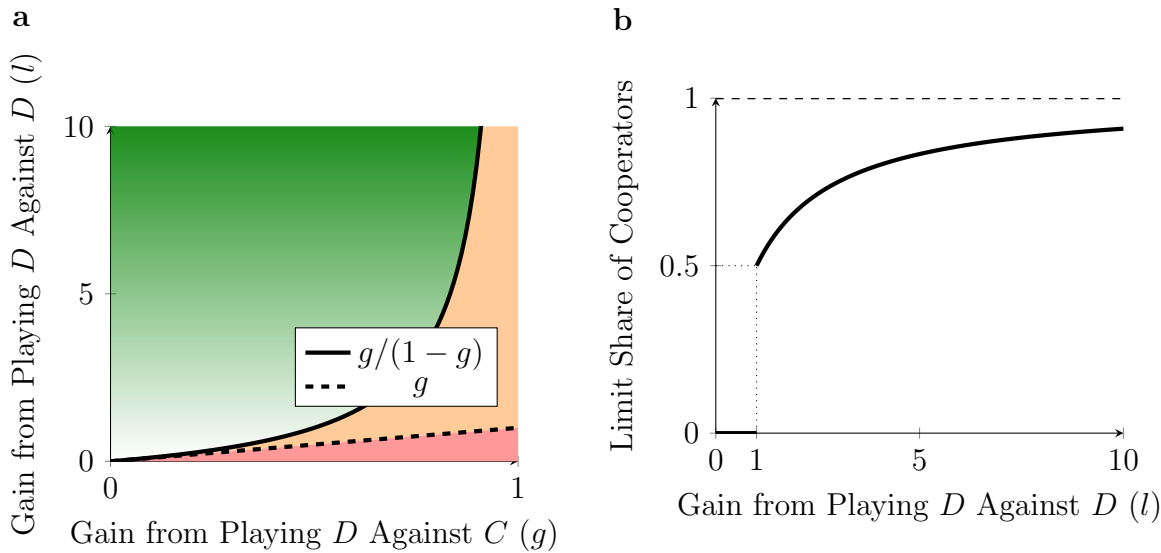
\begin{figure}[H]
\centering
\begin{tikzpicture}
\node(a) at (-4.5,0){
\begin{tikzpicture}
\begin{axis}[
    axis lines = left,
    xlabel = {Gain from Playing $D$ Against $C$ ($g$)},
    ylabel = {Gain from Playing $D$ Against $D$ ($l$)},
    xlabel style={align=center},
    ylabel style={align=center},
    xtick = {0,1},
    ytick = {0,5,10},
    xmin=0,xmax=1,
    ymin=0,ymax=10,
    legend style={at={(0.95,0.4)}},
]
\addplot[name path=C1,mark=none,solid,ultra thick] [
    domain=0:11/12, 
    samples=100, 
]
{x/(1-x)};
\addlegendentry{$g/(1-g)$}

\addplot[name path=C2,mark=none,dashed,ultra thick] [
    domain=0:1, 
    samples=100, 
]
{x};
\addlegendentry{$g$}

\addplot[name path=C3,mark=none,solid,transparent] [
    domain=0:1, 
    samples=100, 
]
{0};

\addplot[name path=C4,mark=none,solid,transparent] [
    domain=0:1, 
    samples=100, 
]
{10};

\addplot[path fading = myfading, fill=deepgreen]fill between[of=C1 and C4, soft clip={domain=0:10/11}];

\addplot[fill=orange!40]fill between[of=C2 and C1, soft clip={domain=0:12/13}];

\addplot[fill=orange!40]fill between[of=C2 and C4, soft clip={domain=10/11:1}];

\addplot[fill=red!40]fill between[of=C3 and C2, soft clip={domain=0:1}];

\end{axis}
\end{tikzpicture}
};
\node(c) at (3.5,0){
\begin{tikzpicture}
\begin{axis}[
    axis lines = left,
    xlabel = {Gain from Playing $D$ Against $D$ ($l$)},
    ylabel = {Limit Share of Cooperators},
    xlabel style={align=center},
    ylabel style={align=center},
    xtick = {0,1,5,10},
    ytick = {0,.5,1},
    legend style={at={(2.0,0.65)}},
]

\addplot[name path=C1,mark=none,solid,ultra thick] [
    domain=1:10, 
    samples=100, 
]
{x/(1+x)};

\addplot[name path=C2,mark=none,solid,ultra thick] [
    domain=0:1, 
    samples=100, 
]
{0};

\addplot[name path=C4,mark=none,dashed,thick] [
    domain=0:10, 
    samples=100, 
]
{1};

\addplot[name path=C5,mark=none,dotted,thin,forget plot] [
    domain=0:1, 
    samples=100, 
    ]
    {1/2};
    
\draw [dotted,thin] (1,0) -- (1,.5);

\end{axis}
\end{tikzpicture}
};
\node(c) at (-8,3.75){\textbf{a}};
\node(d) at (0,3.75){\textbf{b}};
\end{tikzpicture}
\caption{\textbf{Limit performance of $GrimK$ strategies.} \textbf{a}, In the green region ($l > g / (1 - g)$), $GrimK$ strategies sustain a positive limit share of cooperators, which increases with $l$, as indicated by a deeper shade of green. In the orange region ($g < l < g / (1 - g)$), the limit share of cooperators with $GrimK$ is $0$, but other strategies may sustain positive cooperation in the limit. In the red region ($l \leq g$), individualistic records preclude cooperation. \textbf{b}, The limit share of cooperators as a function of $l$ when $g = 1/2$.  At $l = 1$, there is a discontinuity; as $l \rightarrow \infty$, the limit share of cooperators approaches $1$.}
\label{Limit Performance Figure}
\end{figure}

Note that with image-scoring strategies an individual's image improves when they cooperate, in contrast to $GrimK$ strategies where cooperation only slows the deterioration of one's image. Modifying $GrimK$ strategies by specifying that cooperation improves an individual's image does not help support cooperation: our results for maximum cooperation under $GrimK$ strategies also hold for this more complicated class of strategies.

$GrimK$ strategies also satisfy desirable stability and convergence properties. These derive from a key monotonicity property of $GrimK$ strategies: when the distribution of individual records is more favorable 
today, the same will be true tomorrow, because players with better records both behave more cooperatively and induce more cooperative behavior from their partners. (See \textbf{Methods} for a precise statement.) From this observation it can be shown that, whenever the initial distribution of records is more favorable than the best steady-state record distribution, the record distribution converges to the best steady state. Similarly, whenever the initial distribution is less favorable than the worst steady-state, convergence to the worst steady state obtains. See \textbf{Fig. \ref{Convergence Figure}}. These additional robustness properties are not shared by more complicated, non-monotone strategies that can sometimes support  cooperation for a wider range of parameters than $GrimK$.

\begin{figure}[H]
\centering
\begin{tikzpicture}
\node(a) at (-4,0){
\includegraphics[width=.5\textwidth]{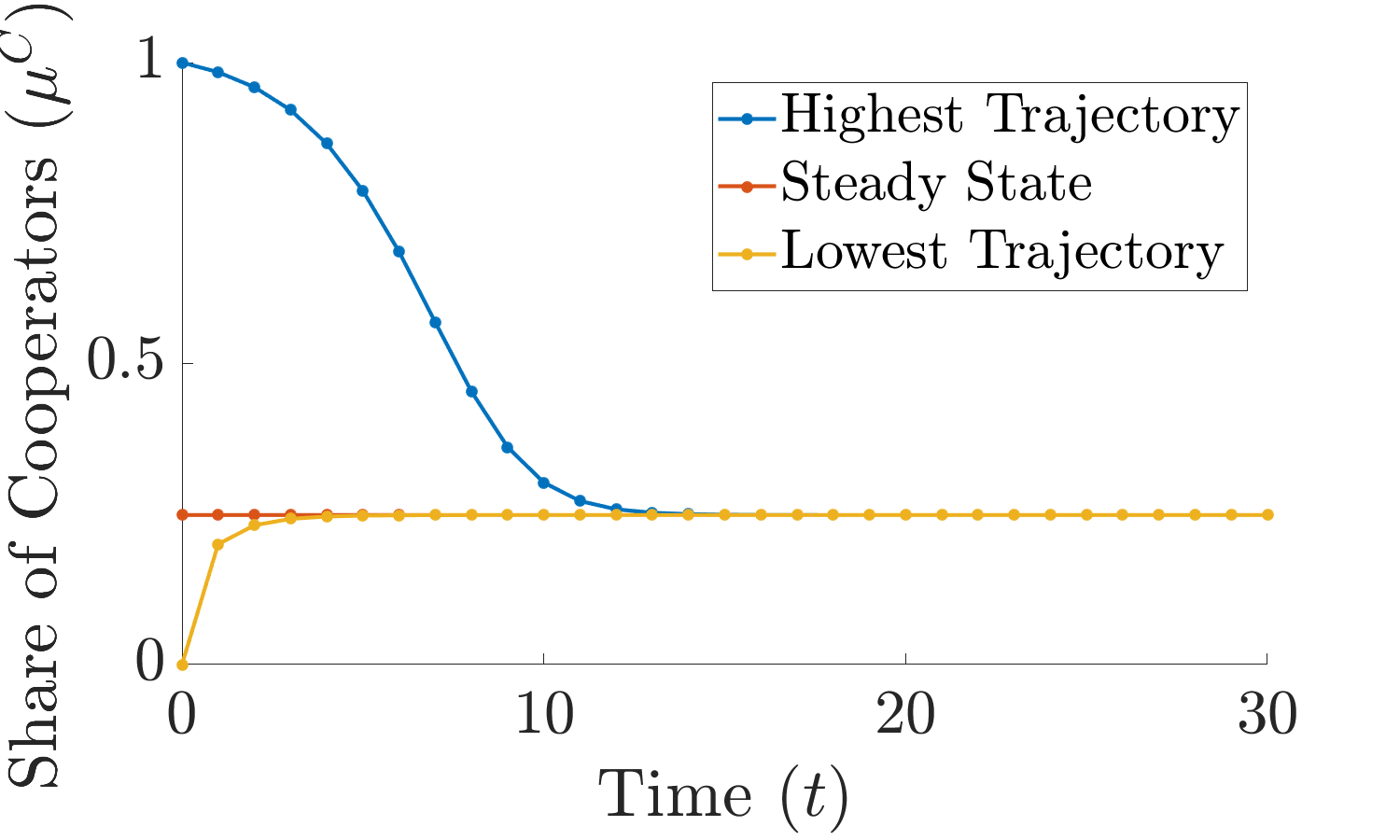}
};
\node(b) at (4,0){
\includegraphics[width=.5\textwidth]{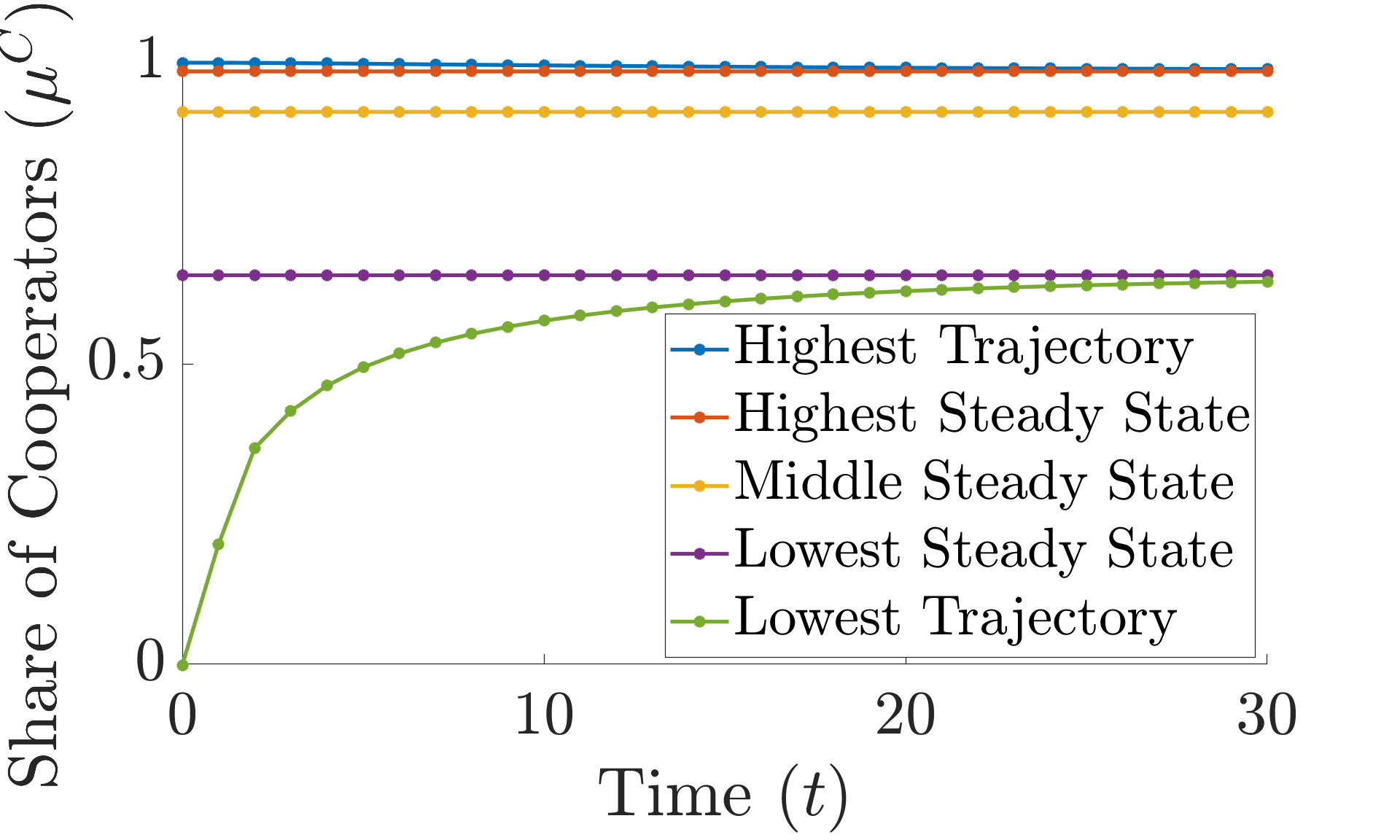}
};
\node(c) at (-8,3){\textbf{a}};
\node(d) at (0,3){\textbf{b}};
\end{tikzpicture}
\caption{\textbf{Convergence of the share of cooperators.} \textbf{a} depicts trajectories for the share of cooperators when $\gamma = 0.8$, $\varepsilon = 0.02$, and players use the $Grim1$ strategy; \textbf{b} does the same for the $Grim2$ strategy. \textbf{a}, All trajectories converge to the unique steady state; \textbf{b}, there are three steady states. Here ``high'' trajectories converge to the most cooperative steady state, while ``low'' trajectories converge to the least cooperative steady state.  See \textbf{Methods} for details.
}
\label{Convergence Figure}
\end{figure}

We also analyze evolutionary properties of $GrimK$ equilibria. When $g < l / (1 + l)$, there is a sequence of $GrimK$ equilibria that are ``steady-state robust to mutants'' and attains the maximum limit cooperation share of $l / (1 + l)$.  By this we mean that, when a small fraction of players adopt some mutant $GrimK'$ strategy where $K' \neq K$, there is a steady-state distribution of records where it remains strictly optimal to play according to $GrimK$. We also perform simulations of dynamic evolution when a  population  playing a $GrimK$ equilibrium is infected by a mutant population playing $GrimK'$ for some $K'\neq K$. (See \textbf{Supplementary Information} and \textbf{Extended Data Fig. \ref{Dynamics Figure}}.)

Although our main analysis takes the basic unit of social interaction to be the standard 2-player PD, many social interactions involve multiple players: the management of the commons and other public resources is a leading example [\citenum{hardin1968tragedy, ostrom1990governing}]. In the \textbf{Supplementary Information} we establish that, when strategic complementarity is sufficiently strong, robust cooperation in the multiplayer public goods game can be supported by a simple variant of $GrimK$ strategies, wherein a player contributes to the public good if and only if all of their current partners have good records. In contrast, with strategic substitutability the unique strict equilibrium involves zero contribution. As the $n$-player public good game is a generalization of the PD, this implies that individualistic records preclude cooperation in the PD with strategic substitutability, as indicated in the red region in \textbf{Fig. \ref{Limit Performance Figure}a}.

We have shown how individualistic records robustly support indirect reciprocity in supermodular PD and multiplayer public goods games. To place our results in context, recall that scoring models do not provide robust incentives, while standing models compute records as a recursive function of a player's partners' past actions and standing, their partners' actions and standing, and so on, and thus require more information than may typically be available. The simplicity and power of individualistic records suggests that they may be usefully adapted to specific settings where cooperation is based on indirect reciprocity,  such as online rating systems [\citenum{resnick2000reputation, dellarocas2005reputation}], credit ratings [\citenum{bhaskar2018, klein1992promise}], decentralized currencies [\citenum{kocherlakota1998incomplete, biais2019blockchain}], and monitoring systems for conflict resolution [\citenum{fearon1996explaining}]. Individualistic records may also prove useful in modeling the role of costly punishment in the evolution of cooperation [\citenum{fehr2000fairness, bhui2019exploitation, boyd2003evolution, henrich2006costly}].

\section*{References}

\setcitestyle{numbers}
\bibliographystyle{plainnat}
\bibliography{Simple_Records_Support_Robust_Indirect_Reciprocity}

\noindent
\textbf{Acknowledgements.}
\newline
\noindent
This work was supported by  National Science Foundation grants SES-1643517 and SES-1555071 and Sloan Foundation grant 2017-9633.
\newline

\noindent
\textbf{Author contributions.}
\newline
\noindent
All authors contributed equally to the work presented in this paper.
\newline

\noindent
\textbf{Author information.}
\newline
\noindent
The authors declare no competing financial interests. Correspondence and requests for materials should be addressed to D.F. (drew.fudenberg@gmail.com).

\setstretch{1.5}

\section*{Methods}

Here we summarize the model and mathematical results; further details are provided in the \textbf{Supplementary Information}.

\subsection*{A Model of Social Cooperation with Individualistic Records}

Time is discrete and doubly infinite: $t\in \{\ldots,-2, -1,0,1,2,\ldots\}$. There is a population of individuals of unit mass, each with survival probability $\gamma \in (0,1)$, so each individual's lifespan is  geometrically distributed with mean $1/(1-\gamma)$. An inflow of $1-\gamma$ newborn players each period keeps the total population size constant. We thus have an infinite-horizon dynamic model with overlapping generations of players [\citenum{Kandori1992, ohtsuki2004should, ohtsuki2006leading, brandt2005indirect, fudenberg2018learning}].

Every period, individuals randomly match in pairs to play the PD (\textbf{Fig. \ref{Prisoner's Dilemma Figure}}). Each individual carries a \emph{record} $k\in \mathbb{N} := \{0,1,2,...\}$. Newborns have record $0$. Whenever an individual plays $D$, their record increases by $1$. Whenever an individual plays $C$, their record remains constant with probability $1-\varepsilon$ and increases by $1$ with probability $\varepsilon$; thus, $\varepsilon \in (0,1)$ measures the amount of noise in the system [\citenum{le2007evolutionary, fudenberg1990evolution, Fudenberg2012SlowWorld, mcnamara2004variation, bendor1991doubt}]. The assumption that only plays of $C$ are hit by noise simplifies some formulas but does not affect any of our results.

When two players meet, they observe each other's records and nothing else. A \emph{strategy} is a mapping $\mathbf{s}: \mathbb{N} \times \mathbb{N} \to \{C,D\}$, with the convention that the first component of the domain is a player's own record and the second component is the current opponent's record. We assume that all players use the same strategy, noting that this must be the case in every strict equilibrium in a symmetric, continuum-agent model like ours. (Of course, players who have different records and/or meet opponents with different records may take different actions.) 

The \emph{state} of the system $\mu \in \Delta(\mathbb{N})$ describes the share of the population with each record, where $\mu_{k}\in [0,1]$ denotes the share with record $k$. When all players use strategy $\mathbf{s}$, let $f_{\mathbf{s}}: \Delta(\mathbb{N}) \to \Delta(\mathbb{N})$ denote the resulting \emph{update map} governing the evolution of the state. (The formula for $f_{\mathbf{s}}(\mu)$ is in the \textbf{Supplementary Information}.) A \emph{steady state} under strategy $\mathbf{s}$ is a state $\mu$ such that $f_{\mathbf{s}}(\mu)=\mu$.

Given a strategy $\mathbf{s}$ and state $\mu$, the expected flow payoff of a player with record $k$ is $\pi_{k}(\mathbf{s},\mu)=\sum_{k'}\mu_{k'}u(\mathbf{s}(k,k'),\mathbf{s}(k',k))$, where $u$ is the PD payoff function. Denote the probability that a player with current record $k$ has record $k'$ $t$ periods in the future by $\phi_{k}(\mathbf{s},\mu)^{t}(k')$. The continuation payoff of a player with record $k$ is then $V_{k}(\mathbf{s},\mu) = (1-\gamma)\sum_{t=0}^{\infty}\gamma^{t}\sum_{k'}\phi_{k}(\mathbf{s},\mu)^{t}(k')\pi_{k'}(\mathbf{s},\mu).$
Note that we have normalized continuation payoffs by $(1-\gamma)$ to express them in per-period terms. A player's objective is to maximize their expected lifetime payoff.

A pair $(\mathbf{s},\mu)$ is an \textit{equilibrium} if $\mu$ is a steady-state under $\mathbf{s}$ and, for each own record $k$ and opponent's record $k'$, the prescribed action $\mathbf{s}(k,k') \in \{C,D\}$ maximizes the expected lifetime payoff from the current period onward, given by $(1-\gamma)u(a,\mathbf{s}(k',k)) + \gamma \sum_{k''}\left( \rho(k,a)[k'']\right) V_{k''}(\mathbf{s},\mu)$, over $a\in\{C,D\}$, where $\rho(k,a)[k'']$ denotes the probability that a player with record $k$ who takes action $a$ acquires next-period record $k''$. Note that this expression depends on the opponent's record only through the predicted current-period opponent action, $\mathbf{s}(k',k)$. In addition, the ratio $(1-\gamma)/\gamma $ captures the weight that players place on their current payoff relative to their  continuation payoff from tomorrow on. We study limits where this ratio converges to $0$, as opposed to time-average payoffs which give exactly $0$ weight to any one period's payoff, because in the latter case optimization and equilibrium impose unduly weak restrictions [\citenum{fudenberg1986folk}]. 
An equilibrium is \emph{strict} if the maximizer is unique for all pairs $(k,k')$, i.e. the optimal action is always unique. Note that this equilibrium definition allows agents to maximize over all possible strategies, as opposed to only strategies from some pre-selected set. We focus on strict equilibria because they are robust, and they remain equilibria under ``small" perturbations of the model. Note that the strategy $\emph{Always Defect}$, i.e. $s(k,k')=D$ for all $(k,k')$, together with any steady state is always a strict equilibrium. Lemma \ref{Equilibrium Characterization} in the \textbf{Supplementary Information} characterizes the steady states for any $GrimK$ strategy, as well as the $\gamma,\varepsilon,g,l$ parameters for which the steady states are equilibria.

\subsection*{Limit Cooperation under $GrimK$ Strategies}

Under $GrimK$ strategies, a matched pair of players cooperate if and only if both records are below a pre-specified cutoff $K$: that is, $s(k,k')=C$ if $\max\{k,k'\}<K$, and $s(k,k')=D$ if $\max\{k,k'\}\geq K$.  

We call an individual a \emph{cooperator} if their record is below $K$ and a \emph{defector} otherwise. Note that each individual may be a cooperator for some periods of their life and a defector for other periods, rather than being pre-programmed to cooperate or defect for their entire life.

Given an equilibrium strategy $GrimK$, let $\mu^{C}=\sum_{k=0}^{K-1}\mu_{k}$ denote the corresponding steady-state share of cooperators. Note that, in a steady state with cooperator share $\mu^{C}$, mutual cooperation is played in share $(\mu^{C})^2$ of all matches. Let $\overline{\mu}^{C}(\gamma, \varepsilon)$ be the maximal share of cooperators in any tolerant trigger equilibrium (allowing for every possible $K$) when the survival probability is $\gamma$ and the noise level is $\varepsilon$.

Theorem \ref{Limit Performance Result} in the \textbf{Supplementary Information}  characterizes the performance of equilibria in $GrimK$ strategies in the double limit where the survival probability approaches $1$---so that players expect to live a long time and the ``shadow of the future" looms large---and the noise level approaches $0$---so that records are reliable enough to form the basis for incentives. (This long-lifespan/low-noise limit is the leading case of interest in theoretical analyses of indirect reciprocity [\citenum{ohtsuki2004should, ohtsuki2006leading, Ellison1994, horner2006folk, Takahashi2010}].)
The theorem  shows that, in the double limit $(\gamma,\varepsilon)\to (1,0)$, $\bar{\mu}^{C}(\gamma,\varepsilon)$ converges to $l/(1+l)$ when $g<l/(1+l)$, and converges to $0$ when $g>l/(1+l)$. The formal statement and proof of this result are contained in the \textbf{Supplementary Information}.

Barring knife-edge cases, tolerant trigger strategies can thus robustly support positive cooperation in the double limit $(\gamma,\varepsilon)\to (1,0)$ if and only if the gain from defecting against a partner who cooperates is significantly smaller than the loss from cooperating against a partner who defects: $g<l/(1+l)$. Moreover, the maximum level of cooperation in this case is $l/(1+l)$. Here we explain the logic of this result.

We first show that $g < \mu^C$ in any $GrimK$ equilibrium. Newborn individuals have continuation payoff equal to the average payoff in the population, which is $\left( \mu^{C} \right)^2$. Thus, since a newborn player plays $C$ if and only if matched with a cooperator, $\left( \mu^{C} \right)^{2} = (1 - \gamma) \mu^{C} + \gamma \mu^{C} V_{0}^{C} + \gamma (1 - \mu^{C}) V_{0}^{D}$, where $V_{0}^{C}$ and $V_{0}^{D}$ are the expected continuation payoffs of a newborn player after playing $C$ and $D$, respectively. Newborn players have the highest continuation payoff in the population, so $V_{0}^{C} \leq V_{0} = \left( \mu^{C} \right)^{2}$. For a newborn player to prefer not to cheat a cooperative partner, it must be that $V_{0}^{D} < V_{0}^{C} - (1 - \gamma) g / \gamma$, so when $\mu^{C} < 1$ (as is necessarily the case with any noise),
\begin{align*}
     \left( \mu^C \right)^{2} & < (1-\gamma) \mu^{C} + \gamma \left( \mu^{C} \right)^{2} - (1 - \gamma) (1 - \mu^{C})g .
\end{align*}
This inequality can hold only if $g < \mu^C$.

We next show that $\gamma (1 - \varepsilon) \mu^C < l/(1+l)$ in any $GrimK$ equilibrium. The continuation payoff $V_{K-1}$ of an individual with record $K-1$ satisfies $V_{K-1}=(1-\gamma)\mu^C + \gamma (1 - \varepsilon) \mu^C V_{K-1}$, or $V_{K-1}=(1-\gamma)\mu^C/(1-\gamma (1 - \varepsilon) \mu^C)$. A necessary condition for an individual with record $K-1$ to prefer to play $D$ against a defector partner is $(1-\gamma)(-l)+\gamma (1 - \varepsilon) V_{K-1} <0$, or $l>\gamma (1 - \varepsilon) V_{K-1}/(1-\gamma)$. Combining this inequality with the expression for $V_{K-1}$ yields $\gamma (1 - \varepsilon) \mu^C < l/(1+l)$, which in the $(\gamma, \varepsilon) \rightarrow (1,0)$ limit gives $\mu^{C} \leq l / (1 + l)$.

    We have established that tolerant trigger strategies can support positive cooperation in the $(\gamma, \varepsilon) \rightarrow (1,0)$ limit only if $g\leq l/(1+l)$, and that the maximum cooperation share cannot exceed $l/(1+l)$. The proof of Theorem \ref{Limit Performance Result} is completed by showing that when $g < l/(1+l)$, by carefully choosing the tolerance level $K$, $GrimK$ can support cooperation shares arbitrarily close to any value between $g$ and $l/(1+l)$  in equilibrium when the survival probability is close to $1$ and the noise level is close to $0$.

\subsection*{Convergence of $GrimK$ Strategies}

Fix an arbitrary initial record distribution $\mu^{0}\in \Delta (\mathbb{N})$. When all individuals use $GrimK$ strategies, the population share with record $k$ at time $t$, $\mu_k^t$, evolves according to
\begin{equation*}
\begin{split}
\mu_{0}^{t + 1} & = 1 - \gamma + \gamma (1 - \varepsilon) \mu^{C,t} \mu_{0}^{t} , \\
\mu_{k}^{t + 1} & = \gamma (1 - (1 - \varepsilon) \mu^{C,t}) \mu_{k - 1}^{t} + \gamma (1 - \varepsilon) \mu^{C,t} \mu_{k}^{t} \text{ for } 0 < k < K , \\
\end{split}
\end{equation*}
where $\mu^{C,t}=\sum_{k=0}^{K-1}\mu_{k}^t$.

Fixing $K$, we say that distribution $\mu$ \emph{dominates} (or is \emph{more favorable than}) distribution $\tilde{\mu}$ if, for every $k < K$, $\sum_{\tilde{k}=0}^{k}\mu_{\tilde{k}}\geq \sum_{\tilde{k}=0}^{k}\tilde{\mu}_{\tilde{k}}$; that is, if for every $k < K$ the share of the population with record no worse than $k$ is greater under distribution $\mu$ than under distribution $\tilde{\mu}$. Under the $GrimK$ strategy, let $\bar{\mu}$ denote the steady state with the largest share of cooperators, and let $\underline{\mu}$ denote the steady state with the smallest share of cooperators.

Theorem \ref{convergence theorem} in the \textbf{Supplementary Information} shows that, if the initial record distribution is more favorable than $\bar{\mu}$, then the record distribution converges to $\bar{\mu}$; similarly, if the initial record distribution is less favorable that $\underline{\mu}$, then the record distribution converges to $\underline{\mu}$. Formally, if $\mu^{0}$ dominates $\bar{\mu}$, then $\lim_{t\to \infty} \mu^{t} = \bar{\mu}$; similarly, if $\mu^{0}$ is dominated by $\underline{\mu}$, then $\lim_{t\to \infty} \mu^{t} = \underline{\mu}$.

In \textbf{Fig. \ref{Convergence Figure}a} the blue trajectory corresponds to the initial distribution where all players have record $0$, the red trajectory is constant at the unique steady-state value $\mu^{C} \approx .2484$, and the yellow trajectory corresponds to the initial distribution where all players have defector records. Here  all the trajectories converge to the unique steady state. In \textbf{Fig. \ref{Convergence Figure}b}, the red trajectory is constant at the largest steady-state value $\mu^{C} \approx .9855$, the yellow trajectory is constant at the intermediate steady-state value $\mu^{C} \approx .9184$, and the purple trajectory is constant at the smallest steady-state value $\mu^{C} \approx .6471$. The blue trajectory corresponds to the initial distribution where all players have record $0$ and converges to the largest steady-state share of cooperators. The green trajectory corresponds to the initial distribution where all players have defector records and converges to the smallest steady-state share of cooperators.

\section*{Extended Data Figure}

\setcounter{figure}{0}

See \textbf{Supplementary Information} for details.

\renewcommand{\figurename}{Extended Data Figure}
\renewcommand{\thefigure}{\arabic{figure}}

\begin{figure}[H]
\centering
\begin{tikzpicture}
\node(a) at (0,4.5){
\includegraphics[width=1\textwidth]{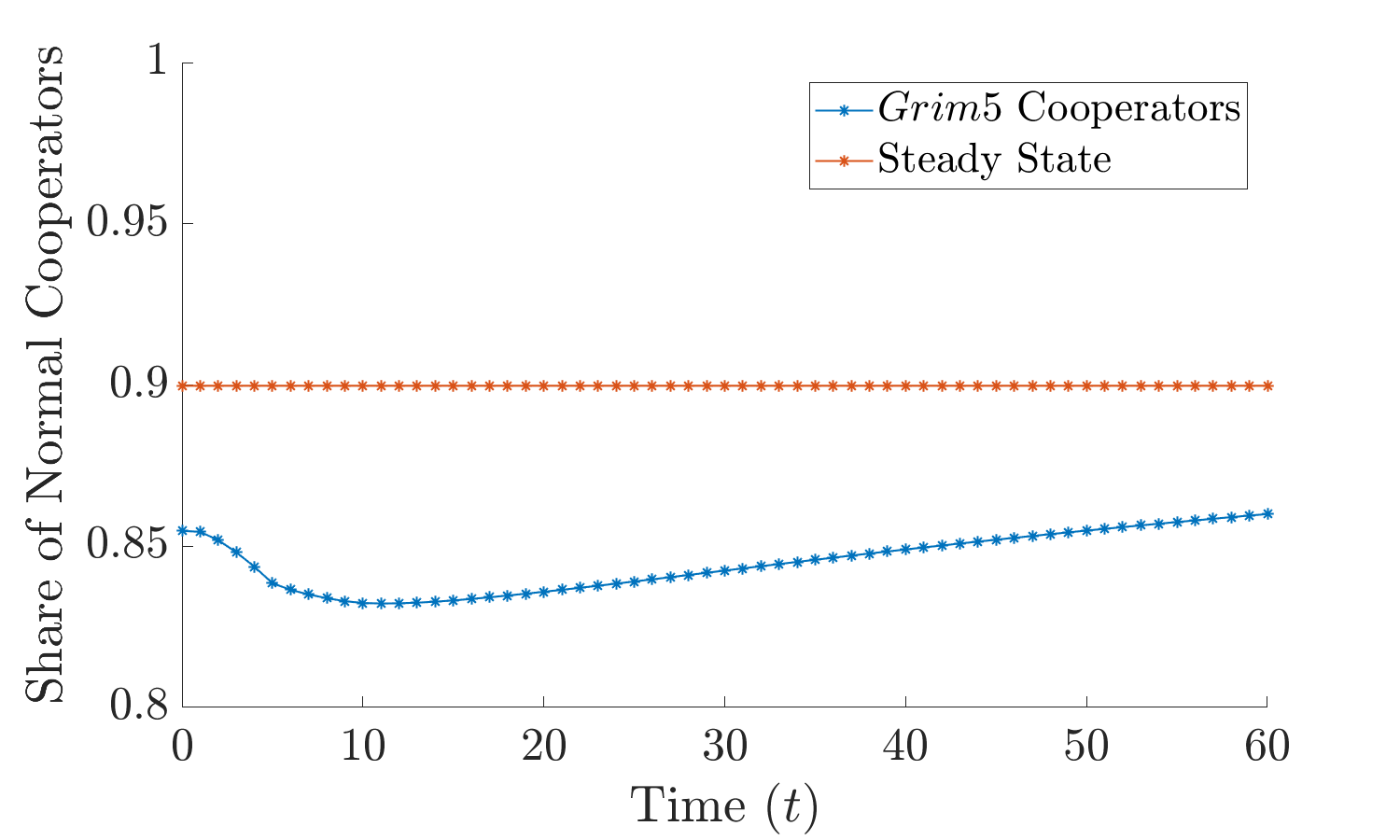}
};
\node(b) at (0,-5){
\includegraphics[width=1\textwidth]{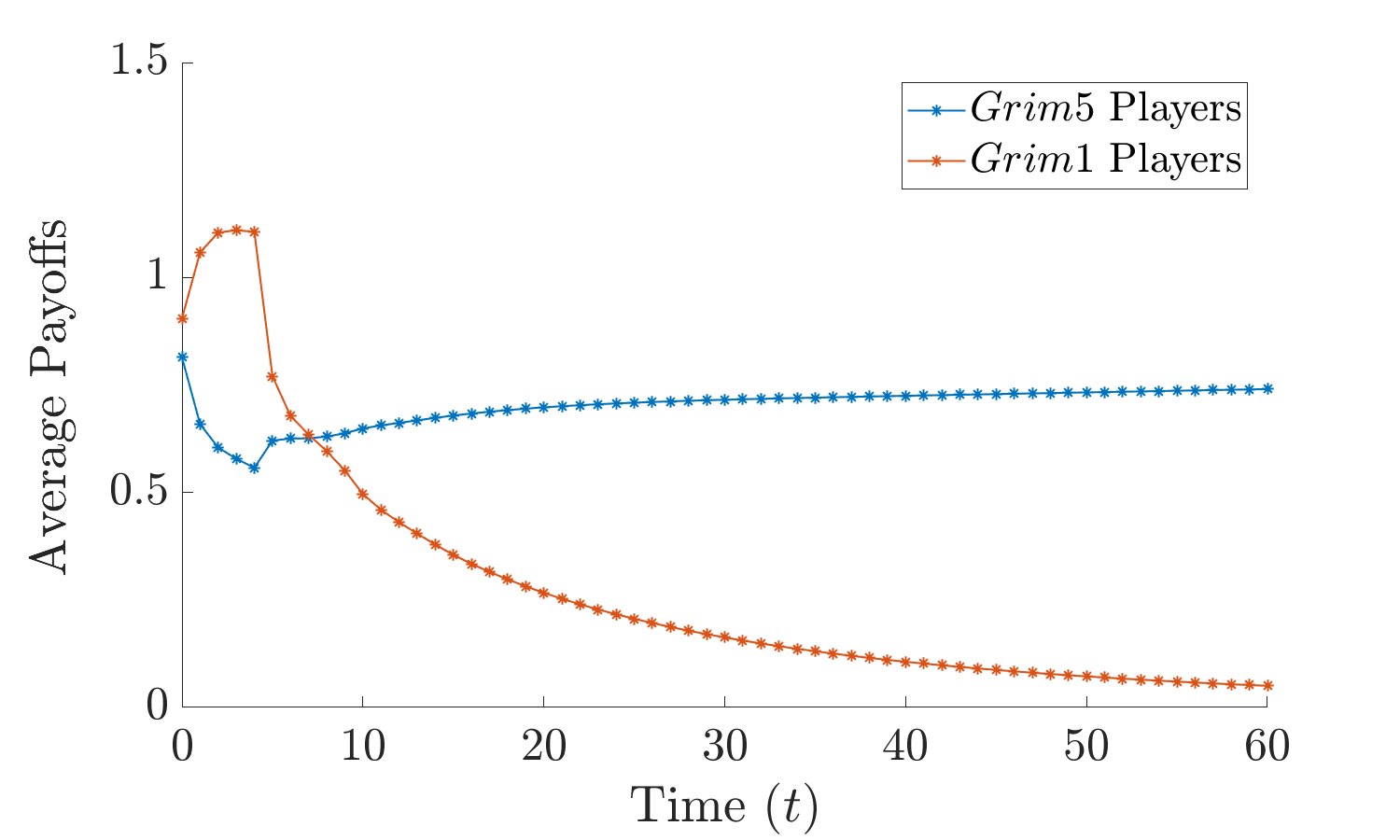}
};
\node(c) at (-8,9){\textbf{a}};
\node(d) at (-8,-.5){\textbf{b}};
\end{tikzpicture}
\caption{\textbf{Evolutionary dynamics.} \textbf{a}, The blue curve depicts the evolution of the share of players that use $Grim5$ and are cooperators (i.e. have some record $k < 5$). \textbf{b}, The average payoffs in the normal $Grim5$ population (blue curve) and in the mutant $Grim1$ population (red curve).}
\label{Dynamics Figure}
\end{figure}

\newpage

\title{Supplementary Information for ``Simple Records Support Robust Indirect Reciprocity''}

\maketitle
\tableofcontents

\setstretch{1.5}

\section{Related Work}
Work on equilibrium cooperation in repeated games began with studies of reciprocal altruism with general stage games where a fixed set of players interacts repeatedly with a commonly known start date and a common notion of calendar time [\citenum{friedmand1971supergames, fudenberg1986folk, aumann1994long}], and has been expanded to allow for various sorts of noise and imperfect observability [\citenum{Fudenberg1994TheInformation, Kandori1998PrivateCollusion, Joe2012AutomatedApproach, Compte2015, Deb2018TheMatching}]. In contrast, most evolutionary analyses of repeated games have focused on the prisoner's  dilemma [\citenum{Axelrod1981TheCooperation, boyd1987no, boyd1989evolution, farrell1989evolutionary, fudenberg1990evolution, binmore1992evolutionary, nowak1992evolutionary, nowak1992tit, nowak1993strategy, bendor1995types, axelrod2004altruism, nowak2004emergence, nowak2006five, imhof2007tit, bowles2011cooperative}], though a few evolutionary analyses have considered more complex stage games [\citenum{hauert2007via, sigmund2010social, bear2016intuition}]. 
Similarly, most laboratory and field studies of the effects of repeated interaction have also focused on the prisoner's dielmma [\citenum{Axelrod1981TheCooperation, fearon1996explaining, Fudenberg2012SlowWorld, DalBo2018OnSurvey}], though some papers consider variants with an additional third action [\citenum{dreber2008winners, rand2009positive}].

Reciprocal altruism is an important force in long-term relationships among a relatively small number players, such as business partnerships or
collusive agreements among firms, but there are many social settings where people manage to cooperate even though direct reciprocation is impossible. These interactions are better modelled as  games with repeated random matching [\citenum{Rosenthal1979}]. When the population is small compared to the discount factor, cooperation in the prisoner's dilemma can be enforced by contagion equilibria even when players have no information at all about each other's past actions [\citenum{ Kandori1992, Ellison1994, harrington1995cooperation}]. These equilibria do not exist when the population is large compared to the discount factor, so they are ruled out by our assumption of a continuum population.

Previous research on indirect reciprocity in large populations has studied the enforcement of cooperation as an equilibrium using first-order information. Takahashi [\citenum{Takahashi2010}] shows that cooperation can be supported as a strict equilibrium when the PD exhibits strategic complementarity; however, his model does not allow noise or the inflow of new players, and assumes players can use a commonly known calendar to coordinate their play. Heller and Mohlin [\citenum{Heller2018ObservationsCooperation}] show that, under strategic complementarity, the presence of a small share of players who always defect allows cooperation to be sustained as a stable (though not necessarily strict) equilibrium when players are infinitely lived and infinitely patient and are restricted to using stationary strategies. The broader importance of strategic complementarity has long been recognized in economics [\citenum{bulow1985multimarket, fudenberg1984fat}] and game theory [\citenum{vives1990nash, milgrom1990rationalizability}].

Many papers study the evolutionary selection of
cooperation using image scoring [\citenum{Sigmund2012MoralReciprocity, Nowak1998EvolutionScoring, ohtsuki2004should, santos2018social, wedekind2000cooperation, leimar2001evolution, milinski2002reputation, nowak1998dynamics, fishman2003indirect, takahashi2006importance, berger2011learning, lotem1999evolution}]. With image scoring, each player has
first-order information about their partner, but conditions their action only on
their partner's record and not on their own record. These strategies are never a
strict equilibrium, and are
typically unstable in environments with noise [\citenum{leimar2001evolution, Panchanathan2003AReciprocity}].
With more complex ``higher order" record systems such as standing, cooperation can typically be enforced in a wide range of games [\citenum{Sugden1986TheWelfare, okuno1995, Kandori1992, milinski2001cooperation, brandt2004logic, ohtsuki2004should, brandt2005indirect, pacheco2006stern, ohtsuki2007global, uchida2010competition, nakamura2011indirect}]. Most research has focused on the case where each player has only two states: for instance, Ohtsuko and Iwasa [\citenum{ohtsuki2004should,ohtsuki2006leading}] consider all possible record systems of this type, and show that only 8 of them allow an ESS with high levels of coooperation. Our first-order records can take on any  integer values, so they do not fall into this class, even though behavior is determined by a binary classification of the records. Another innovation in our model is to consider steady-state equilibria in a model with a constant inflow of new players, even without any evolutionary dynamics. This approach has previously been used to model industry dynamics in economics [\citenum{hopenhayn1992entry, jovanovic1982selection}], but is novel in the context of models of cooperation and repeated games.

The key novel aspects of our framework may thus be summarized as follows:
\begin{enumerate}
    \item Information (``records'') depends only on a player's own past actions, but players condition their behavior on their own record as well as their current partner's record.
    \item The presence of strategic complementarity implies that such two-sided conditioning can generate strict incentives for cooperation.
    \item Records are integers, and can therefore remain ``good'' even if they are repeatedly hit by noise (as is inevitable when players are long-lived).
    \item The presence of a constant inflow of new players implies that the population share with ``good'' records can remain positive even in steady state.
\end{enumerate}

\section{Model Description}

Here we formally present the model and the steady-state and equilibrium concepts.

Time is discrete and doubly infinite: $t\in \{\ldots,-2, -1,0,1,2,\ldots\}$. There is a unit mass of individuals, each with survival probability $\gamma \in (0,1)$, and an inflow of $1-\gamma$ newborns each period to keep the population size constant.

Every period, individuals randomly match in pairs to play the PD (\textbf{Fig. \ref{Prisoner's Dilemma Figure}}). Each individual carries a \emph{record} $k\in \mathbb{N} := \{0,1,2,...\}$. Newborns have record $0$. When two players meet, they observe each other's records and nothing else. A \emph{strategy} is a mapping $\mathbf{s}: \mathbb{N} \times \mathbb{N} \to \{C,D\}$. All players use the same strategy. When the players use strategy $\mathbf{s}$, the distribution over next-period records of a player with record $k$ who meets a player with record $k^{\prime}$ is given by
\begin{equation*}
\phi_{k,k^{\prime}}(\mathbf{s}) =
\begin{cases}
k \text{ w/ prob. } 1-\varepsilon, \ k+1 \text{ w/ prob. } \varepsilon & \text{if } \ \mathbf{s}(k,k')=C \\
k+1 & \text{if } \ \mathbf{s}(k,k')=D \text{ w/ prob. } 1
\end{cases}.
\end{equation*}

The \emph{state} of the system $\mu \in \Delta(\mathbb{N})$ describes the share of the population with each record, where $\mu_{k}\in [0,1]$ denotes the share with record $k$. The evolution of the state over time under strategy $\mathbf{s}$ is described by the update map $f_{\mathbf{s}}: \Delta(\mathbb{N}) \to \Delta(\mathbb{N})$, given by
\begin{equation*}
\begin{split}
f_{\mathbf{s}}(\mu )[0]&:=1-\gamma +\gamma \sum_{k'}\sum_{k''} \mu_{k'} \mu_{k''} \phi_{k',k''}(\mathbf{s})[0], \\
f_{\mathbf{s}}(\mu)[k]&:=\gamma \sum_{k'}\sum_{k''} \mu_{k'} \mu_{k''} \phi_{k',k''}(\mathbf{s})[k] \text{ for } k\neq 0 .
\end{split}
\end{equation*}
A \emph{steady state} under strategy $\mathbf{s}$ is a state $\mu$ such that $f_{\mathbf{s}}(\mu)=\mu$.

Given a strategy $\mathbf{s}$ and state $\mu$, the expected flow payoff of a player with record $k$ is $\pi_{k}(\mathbf{s},\mu)=\sum_{k'}\mu_{k'}u(\mathbf{s}(k,k'),\mathbf{s}(k',k))$, where $u$ is the (normalized) PD payoff function given by
\begin{equation*}
u(a_{1},a_{2}) =
\begin{cases}
1 & \text{if } (a_{1},a_{2}) = (C,C) \\
- l & \text{if } (a_{1},a_{2}) = (C,D) \\
1 + g & \text{if } (a_{1},a_{2}) = (D,C) \\
0 & \text{if } (a_{1},a_{2}) = (D,D)
\end{cases} .
\end{equation*}
Denote the probability that a player with current record $k$ has record $k'$ $t$ periods in the future by $\phi_{k}(\mathbf{s},\mu)^{t}(k')$. The continuation payoff of a player with record $k$ is then $V_{k}(\mathbf{s},\mu) = (1-\gamma)\sum_{t=0}^{\infty}\gamma^{t}\sum_{k'}\phi_{k}(\mathbf{s},\mu)^{t}(k')\pi_{k'}(\mathbf{s},\mu).$ A player's objective is to maximize their expected lifetime payoff.

A pair $(\mathbf{s},\mu)$ is an \textit{equilibrium} if $\mu$ is a steady-state under $\mathbf{s}$ and, for each own record $k$ and opponent's record $k'$, $\mathbf{s}(k,k') \in \{C,D\}$ maximizes $(1-\gamma)u(a,\mathbf{s}(k',k)) + \gamma \sum_{k''}\left( \rho(k,a)[k'']\right) V_{k''}(\mathbf{s},\mu)$ over $a\in\{C,D\}$, where $\rho(k,a)[k'']$ denotes the probability that a player with record $k$ who takes action $a$ acquires next-period record $k''$. An equilibrium is \emph{strict} if the maximizer is unique for all pairs $(k,k')$. 

This equilibrium definition encompasses two forms of strategic robustness. First, we allow agents to maximize over all possible strategies, as opposed to only strategies from some pre-selected set. Second, we focus on strict equilibria, which remain equilibria under ``small" perturbations of the model.

\section{Limit Cooperation under $GrimK$ Strategies}

Under $GrimK$ strategies, a matched pair of players cooperate if and only if both records are below a pre-specified cutoff $K$: that is, $s(k,k')=C$ if $\max\{k,k'\}<K$ and $s(k,k')=D$ if $\max\{k,k'\}\geq K$.  

We call an individual a \emph{cooperator} if their record is below $K$ and a \emph{defector} otherwise. Note that each individual may be a cooperator for some periods of their life and a defector for other periods.

Given an equilibrium strategy $GrimK$, let $\mu^{C}=\sum_{k=0}^{K-1}\mu_{k}$ denote the corresponding steady-state share of cooperators. Note that, in a steady state with cooperator share $\mu^{C}$, mutual cooperation is played in share $(\mu^{C})^2$ of all matches. Let $\overline{\mu}^{C}(\gamma, \varepsilon)$ be the maximal share of cooperators in any tolerant trigger equilibrium (allowing for every possible $K$) when the survival probability is $\gamma$ and the noise level is $\varepsilon$.

The following theorem characterizes the performance of equilibria in $GrimK$ strategies in the double limit of interest [\citenum{ohtsuki2004should, ohtsuki2006leading, Ellison1994, horner2006folk, Takahashi2010}] where the survival probability approaches $1$---so that players expect to live a long time and the ``shadow of the future" looms large---and the noise level approaches $0$---so that players who play $C$ are unlikely to be recorded as playing $D$.

\begin{theorem} 
\begin{equation*}
\lim_{(\gamma, \varepsilon) \rightarrow (1,0)} \overline{\mu}^{C}(\gamma, \varepsilon) =
\begin{cases}
\frac{l}{1 + l} & \text{if } g < \frac{l}{1 + l} \\
0 & \text{if } g > \frac{l}{1 + l}
\end{cases} .
\end{equation*}
\label{Limit Performance Result}
\end{theorem}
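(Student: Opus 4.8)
The plan is to prove matching upper and lower bounds on $\overline{\mu}^{C}(\gamma,\varepsilon)$, using as backbone the steady-state characterization (the Equilibrium Characterization lemma): under $GrimK$ the record shares decay geometrically, $\mu_{k}=\mu_{0}r^{k}$ for $0\le k<K$ with $\mu_{0}=(1-\gamma)/(1-\gamma q)$ and $r=\gamma(1-q)/(1-\gamma q)$, where $q=(1-\varepsilon)\mu^{C}$ is the per-period probability a cooperator keeps their record and $\mu^{C}$ solves the fixed-point relation $\mu^{C}=\mu_{0}(1-r^{K})/(1-r)$. Since a defector meets universal defection and earns flow $0$, defectors have continuation value $0$, and a backward recursion gives the cooperator values $V_{k}=A(1-r^{K-k})/(1-r)$ with $A=\mu_{0}\mu^{C}$. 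The first observation is that only two incentive constraints can bind: because $V_{k}-V_{k+1}=Ar^{K-1-k}$ is increasing in $k$, the temptation to defect against a cooperator is strongest at record $0$, while the temptation to cooperate against a defector (to preserve one's record) is strongest at record $K-1$, and defectors strictly prefer $D$ automatically. Thus a $GrimK$ strict equilibrium is equivalent to these two inequalities.

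For the upper bound I would use the aggregate accounting identity that in steady state the newborn's value equals the population-average flow payoff, $V_{0}=(\mu^{C})^{2}$ (sum the Bellman equations against $\mu$ and invoke steady-state flow balance). Feeding the record-$0$ constraint---a newborn must not cheat a cooperating partner, so $V_{0}^{D}<V_{0}^{C}-(1-\gamma)g/\gamma$ together with $V_{0}^{C}\le V_{0}$---into this identity yields $g<\mu^{C}$. The record-$(K-1)$ constraint, with $V_{K-1}=(1-\gamma)\mu^{C}/(1-\gamma(1-\varepsilon)\mu^{C})$, gives $\gamma(1-\varepsilon)\mu^{C}<l/(1+l)$. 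These hold for every admissible $K$, so $\overline{\mu}^{C}(\gamma,\varepsilon)<\frac{l/(1+l)}{\gamma(1-\varepsilon)}$, whose limit is $l/(1+l)$; moreover, when $g>l/(1+l)$ the pair $g<\mu^{C}$ and $\mu^{C}<\frac{l/(1+l)}{\gamma(1-\varepsilon)}$ becomes jointly infeasible once $(\gamma,\varepsilon)$ is close enough to $(1,0)$, leaving only the all-defect equilibrium, so $\overline{\mu}^{C}\to 0$.

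For the lower bound, assume $g<l/(1+l)$ and fix any target $\mu^{*}\in(g,l/(1+l))$. I would show the two binding constraints become, in the limit, exactly $\mu^{*}>g$ (from record $0$, since $V_{0}-V_{1}=Ar^{K-1}\approx(1-\gamma)\mu^{*}$ against a right-hand side $\approx(1-\gamma)g$) and $\mu^{*}<l/(1+l)$ (from record $K-1$), both strict for $(\gamma,\varepsilon)$ near $(1,0)$. It then remains to produce an integer $K$ realizing $\mu^{C}\approx\mu^{*}$: the steady-state $\mu^{C}$ is strictly increasing in $K$, and in the relevant regime $K\sim-(1-\mu^{*})\ln(1-\mu^{*})/(1-\gamma)$ the increment in $\mu^{C}$ per unit increment of $K$ is $O(1-\gamma)\to 0$, so some integer $K$ places $\mu^{C}$ within $o(1)$ of $\mu^{*}$. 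This $GrimK$ is then a strict equilibrium with cooperator share $\to\mu^{*}$, giving $\liminf\overline{\mu}^{C}\ge\mu^{*}$; letting $\mu^{*}\uparrow l/(1+l)$ closes the gap.

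The main obstacle is this last construction: one must choose the integer tolerance $K=K(\gamma,\varepsilon)$ so that the jointly determined fixed-point value $\mu^{C}$ lands near the target while both incentive inequalities remain strict, and this must hold uniformly along every path $(\gamma,\varepsilon)\to(1,0)$ rather than along a single convenient one. The delicacy is that $\mu^{C}$, $q$, and $r$ are coupled through the fixed point while $K$ moves only in integer steps, so bounding the approximation error in $\mu^{C}$ and verifying that this error never pushes either constraint across its threshold (where strictness could be lost) is where the real work lies; by contrast, the accounting identity and the upper-bound inequalities are essentially forced.
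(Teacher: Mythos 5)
Your overall route is the same as the paper's: the same characterization of $GrimK$ equilibria (geometric record shares, the value functions $V_{k}$, and exactly two binding constraints, at records $0$ and $K-1$), the same upper-bound arguments (the accounting identity $V_{0}=(\mu^{C})^{2}$ giving $g<\mu^{C}$, and the $V_{K-1}$ recursion giving $\gamma(1-\varepsilon)\mu^{C}<l/(1+l)$), and the same reduction of the lower bound to an ``integer problem'' for $K$. The gap is in how you dispose of that integer problem. You assert that the steady-state $\mu^{C}$ is strictly increasing in $K$ and that its increment per unit increment of $K$ is $O(1-\gamma)$. Neither claim is correct as stated. In the limit, feasibility reads $K\approx -(1-\mu^{C})\ln(1-\mu^{C})/(1-\gamma)$, and the map $\mu\mapsto -(1-\mu)\ln(1-\mu)$ is hump-shaped: increasing on $(0,1-1/e)$, decreasing on $(1-1/e,1)$, with its maximum at $\mu=1-1/e$. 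Hence (i) for a given $K$ there are generally several steady states, so ``the'' steady-state $\mu^{C}$ is not a function of $K$; (ii) on the branch $\mu^{C}>1-1/e$ the steady-state share \emph{decreases} in $K$ --- and this branch is exactly where your targets live whenever $l/(1+l)>1-1/e$, i.e.\ $l>e-1$ (e.g.\ the paper's own example $g=0.5$, $l=2.5$); and (iii) near the critical point $\mu^{C}=1-1/e$ the derivative of $K$ with respect to $\mu^{C}$ vanishes, so your $O(1-\gamma)$ step-size estimate fails there.

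The paper's Lemma \ref{Denseness Result} repairs exactly this step. Rather than tracking $\mu^{C}$ as a function of $K$, it inverts the relation: $\tilde{K}(\gamma,\varepsilon,\mu^{C})=\ln(1-\mu^{C})/\ln(\beta(\gamma,\varepsilon,\mu^{C}))$ is single-valued in $\mu^{C}$, and Lemmas \ref{Derivative Result}--\ref{Zeros Result} show that $\left|\partial\tilde{K}/\partial\mu^{C}\right| = |d(\gamma,\varepsilon,\mu^{C})| / \left((1-\mu^{C})|\ln(\beta(\gamma,\varepsilon,\mu^{C}))|\right)$ diverges like $|d|/(1-\gamma)$ uniformly on a neighborhood of any target $\tilde{\mu}^{C}\neq 1-1/e$, because $d\to 1+\ln(1-\mu^{C})\neq 0$ there while $\ln\beta\to 0$. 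The intermediate value theorem then produces $\hat{\mu}^{C}$ within $\Delta$ of the target at which $\tilde{K}$ equals a positive integer, which is precisely the feasibility you need; excluding the single point $1-1/e$ costs nothing, since targets increasing to $l/(1+l)$ can be chosen to avoid it. So your plan goes through once your monotonicity claim is replaced by this inversion-plus-IVT argument with the nondegeneracy condition $\mu^{*}\neq 1-1/e$ made explicit; as written, the estimate you rely on is false at one point and has the wrong sign on an entire branch, including the branch that matters for attaining shares near $l/(1+l)$ when $l$ is large.
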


To prove the theorem, let $\beta: (0,1) \times (0,1) \times (0,1) \rightarrow (0,1)$ be the function given by
\begin{equation}
\beta(\gamma, \varepsilon, \mu^{C}) = \frac{\gamma (1 - (1 - \varepsilon) \mu^{C})}{1 - \gamma (1 - \varepsilon) \mu^{C}} .
\label{beta Expression}
\end{equation}
When players use $GrimK$ strategies and the share of cooperators is $\mu^C$, $\beta(\gamma,\varepsilon,\mu^C)$ is the probability that a player with cooperator record $k$ survives to reach record $k+1$. (This probability is the same for all $k<K$.)

\begin{lemma}
There is a $GrimK$ equilibrium with cooperator share $\mu^{C}$ if and only if the following conditions hold:
\begin{enumerate}
    \item Feasibility:
    \begin{equation}
        \mu^{C} = 1 - \beta(\gamma, \varepsilon, \mu^{C})^{K} . \label{feasible}
    \end{equation}
    
    \item Incentives:
    \begin{align}
        \frac{(1 - \varepsilon) (1 - \mu^{C})}{1 - (1 - \varepsilon) \mu^{C}} \mu^{C} > g , \label{ICC} \\
        \mu^{C} < \frac{1}{\gamma (1 - \varepsilon)} \frac{l}{1 + l} . \label{ICD}
\end{align}
\end{enumerate}
\label{Equilibrium Characterization}
\end{lemma}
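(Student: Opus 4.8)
The plan is to prove both directions of the equivalence at once by computing explicitly, under the $GrimK$ strategy, the steady-state record distribution consistent with a given cooperator share $\mu^{C}$ together with the associated continuation values, and then reading off strict optimality at every record as the two displayed inequalities. Throughout I write $\beta = \beta(\gamma,\varepsilon,\mu^{C})$ as in \eqref{beta Expression}. Since the equilibrium definition already phrases optimality as the one-shot comparison of $C$ versus $D$ in $(1-\gamma)u(a,\mathbf{s}(k',k)) + \gamma\sum_{k''}\rho(k,a)[k'']V_{k''}$, the whole argument reduces to evaluating this expression at each own record.

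First I would pin down the steady state. A player at a cooperator record $k<K$ plays $C$ against the share $\mu^{C}$ of cooperators and $D$ against the share $1-\mu^{C}$ of defectors, so conditional on survival they stay at $k$ with probability $(1-\varepsilon)\mu^{C}$ and advance to $k+1$ otherwise; $\beta$ is exactly the resulting probability of reaching $k+1$ before dying. Writing the balance equations $f_{GrimK}(\mu)=\mu$ for the cooperator records gives $\mu_{0}=1-\beta$ and $\mu_{k}=\beta\mu_{k-1}$, hence $\mu_{k}=(1-\beta)\beta^{k}$ for $0\le k<K$. Summing over $k<K$ yields $\mu^{C}=1-\beta^{K}$, which is precisely \eqref{feasible}; conversely any $\mu^{C}$ solving \eqref{feasible} determines a genuine steady state, with the defector records absorbing the remaining mass $\beta^{K}=1-\mu^{C}$ and playing no further role.

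Next I would compute continuation values. Every cooperator earns flow payoff $\mu^{C}$ (obtaining $1$ against the $\mu^{C}$ cooperators and $0$ against defectors), while every defector earns $0$ forever, so $V_{k}=0$ for $k\ge K$. The cooperator values satisfy $V_{k}=(1-\beta)\mu^{C}+\beta V_{k+1}$ with $V_{K}=0$, which solves to $V_{k}=\mu^{C}(1-\beta^{K-k})$; in particular $V_{0}=(\mu^{C})^{2}$ by \eqref{feasible}. Evaluating the one-shot comparison at a cooperator record $k$ facing a cooperating partner shows $C$ is strictly optimal there iff $\gamma(1-\varepsilon)(V_{k}-V_{k+1})>(1-\gamma)g$, while against a defecting partner $D$ is strictly optimal iff $\gamma(1-\varepsilon)(V_{k}-V_{k+1})<(1-\gamma)l$. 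At a defector record the deviation to $C$ costs $(1-\gamma)l>0$ with no continuation benefit, so $D$ is always strictly optimal and contributes no constraint.

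The final and most delicate step is to collapse these infinitely many constraints to the two stated inequalities. The key observation is that $V_{k}-V_{k+1}=(1-\beta)\mu^{C}\beta^{K-1-k}$ is strictly increasing in $k$ (as $0<\beta<1$), so the cooperation constraint binds at the best record $k=0$ and the defection constraint binds at the threshold record $k=K-1$. Substituting $k=0$ and eliminating $\beta$ and $K$ via $\beta^{K}=1-\mu^{C}$ reduces $\gamma(1-\varepsilon)(V_{0}-V_{1})>(1-\gamma)g$ to \eqref{ICC}, while substituting $k=K-1$ with $V_{K-1}=(1-\beta)\mu^{C}$ reduces $\gamma(1-\varepsilon)(V_{K-1}-V_{K})<(1-\gamma)l$ to \eqref{ICD}. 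I expect the main obstacle to be exactly this reduction: one must verify the monotonicity of the value gaps so that checking the two extreme records suffices, and then carry out the algebraic elimination through \eqref{feasible}. Because the steady state and the values are fully determined by $\mu^{C}$, and the incentive inequalities are equivalent to strict optimality, both directions of the equivalence then follow simultaneously.
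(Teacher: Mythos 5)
Your proposal is correct and follows essentially the same route as the paper's proof: derive the geometric steady-state shares $\mu_{k}=(1-\beta)\beta^{k}$ from the balance equations (the paper's Lemma \ref{Steady State Shares Lemma}), solve the recursion $V_{k}=(1-\beta)\mu^{C}+\beta V_{k+1}$ to get $V_{k}=(1-\beta^{K-k})\mu^{C}$ (the paper's Lemma \ref{Value Functions Lemma}), and then reduce the incentive constraints to the two binding records $k=0$ and $k=K-1$, which yield (\ref{ICC}) and (\ref{ICD}) after eliminating $\beta^{K}=1-\mu^{C}$. Your only addition is to verify explicitly, via $V_{k}-V_{k+1}=(1-\beta)\mu^{C}\beta^{K-1-k}$, the monotonicity of the value gaps that the paper merely asserts when it claims record-$0$ and record-$(K-1)$ players are the most tempted to deviate.
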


Note that $\mu^{C} = 0$ solves (\ref{feasible}) when $K = 0$. For any $K > 0$, $0 < 1 - \beta(\gamma,\varepsilon,\mu^{C})^{K}$ and $1 > 1 - \beta(\gamma,\varepsilon,1)^{K}$, so by the intermediate value theorem, (\ref{feasible}) has some solution $\mu \in (0,1)$. Thus, there is at least one steady state for every $GrimK$ strategy. For some strategies, there are multiple steady states, but never more than $K+1$, because  (\ref{feasible}) can be rewritten as a polynomial equation in $\mu^{C}$ with degree $K + 1$.

The upper bounds on the equilibrium share of cooperators in Table \ref{Upper Bounds on Cooperation Table} are the suprema of the $\mu^{C} \in (0,1)$ that satisfy (\ref{ICC}) and (\ref{ICD}) for the corresponding $(\gamma, \varepsilon)$ parameters. When no $\mu^{C} \in (0,1)$ satisfy (\ref{ICC}) and (\ref{ICD}), the upper bound is $0$, since $Grim0$ (where everyone plays $D$) is always a strict equilibrium.

To see how Part 2 of Theorem \ref{Limit Performance Result} comes from Lemma \ref{Equilibrium Characterization}, note that
\begin{equation*}
\frac{(1 - \varepsilon) (1 - \mu^{C})}{1 - (1 - \varepsilon) \mu^{C}} \leq 1 .
\end{equation*}
Thus, (\ref{ICC}) requires $\mu^{C} > g$. Moreover, combining $\mu^{C} > g$ with (\ref{ICD}) gives $\gamma (1 - \varepsilon) g < l / (1 + l)$. Taking the $(\gamma, \varepsilon) \rightarrow (1,0)$ limit of this inequality gives $g \leq l / (1 + l)$. Thus, when $g > l / (1 + l)$, it follows that $\lim_{(\gamma, \varepsilon) \rightarrow (1,0)} \overline{\mu}^{C}(\gamma, \varepsilon) = 0$.

All that remains is to show that $\lim_{(\gamma, \varepsilon) \rightarrow (1,0)} \overline{\mu}^{C}(\gamma, \varepsilon) = l / (1 + l)$ when $g > l/(1 + l)$. Since $\lim_{\varepsilon \rightarrow 0} (1 - \varepsilon) (1 - \mu^{C}) / (1 - (1 - \varepsilon) \mu^{C}) = 1$ for any fixed $\mu^{C}$ and $\lim_{(\gamma, \varepsilon) \rightarrow (1,0)} 1 / (\gamma (1 - \varepsilon)) = 1$, it follows that values of $\mu^{C}$ smaller than, but arbitrarily close to, $l / (1 + l)$ satisfy (\ref{ICC}) and (\ref{ICD}) in the double limit. Thus, the only difficulty is showing the feasibility of $\mu^{C}$ as a steady-state level of cooperation: because $K$ must be an integer, some values of $\mu^{C}$ cannot be generated by any $K$, for given values of $\gamma$ and $\varepsilon$. The following result shows that this ``integer problem'' becomes irrelevant in the limit. That is, any value of $\mu^{C} \in (0,1)$ can be approximated arbitrarily closely by a feasible steady-state share of cooperators for some $GrimK$ strategy as $(\gamma,\varepsilon) \rightarrow (1,0)$.

\begin{lemma}
Fix any $\mu^{C} \in (0,1)$. For all $\Delta > 0$, there exist $\overline{\gamma} < 1$ and $\overline{\varepsilon} > 0$ such that, for all $\gamma > \overline{\gamma}$ and $\varepsilon < \overline{\varepsilon}$, there exists $\hat{\mu}^{C}$ that satisfies (\ref{feasible}) for some $K$ such that $|\hat{\mu}^{C} - \mu^{C}| < \Delta$.
\label{Denseness Result}
\end{lemma}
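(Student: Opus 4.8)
The plan is to reduce the feasibility equation (\ref{feasible}) to a statement about an auxiliary continuous function taking integer values, and then to show that this function sweeps across a range of length at least $1$ on any fixed short interval around $\mu^{C}$ once $(\gamma,\varepsilon)$ is sufficiently close to $(1,0)$. Since $\gamma < 1$ forces $\beta(\gamma,\varepsilon,\mu) < 1$ for every $\mu \in (0,1)$, replacing the distinguished $\mu^{C}$ in (\ref{feasible}) by a free variable $\mu$ shows that $\mu$ is a feasible steady-state cooperator share for $GrimK$ if and only if $K = G(\mu)$, where
\[
G(\mu) := \frac{\ln(1-\mu)}{\ln\beta(\gamma,\varepsilon,\mu)}
\]
is continuous and strictly positive on $(0,1)$. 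Thus it suffices to produce some $\hat\mu^{C} \in (\mu^{C}-\Delta,\mu^{C}+\Delta)$ at which $G$ takes a positive integer value.

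Next I would pin down the behavior of $G$ in the limit. A direct computation gives the exact identity $1 - \beta(\gamma,\varepsilon,\mu) = (1-\gamma)/(1-\gamma(1-\varepsilon)\mu)$, so $\beta \to 1$ uniformly on compact subsets of $(0,1)$ as $(\gamma,\varepsilon)\to(1,0)$. Combining this with $-\ln\beta = (1-\beta)(1+o(1))$ and $1-\gamma(1-\varepsilon)\mu \to 1-\mu$, both uniform on compacta, yields
\[
G(\mu) = \frac{\psi(\mu) + o(1)}{1-\gamma}, \qquad \psi(\mu) := -(1-\mu)\ln(1-\mu) > 0,
\]
where the $o(1)$ term is uniform over any fixed compact subinterval of $(0,1)$. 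In particular $G(\mu) \to \infty$, and the variation of $G$ over a fixed interval is governed, up to the $1/(1-\gamma)$ blow-up factor, by the variation of the fixed function $\psi$.

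Finally I would fix $\delta < \min\{\Delta,\mu^{C},1-\mu^{C}\}$ and work on $I := [\mu^{C}-\delta,\mu^{C}+\delta] \subset (0,1)$. Because $\psi$ is real-analytic with derivative $\psi'(\mu) = \ln(1-\mu)+1$ vanishing only at the single point $\mu = 1-1/e$, it is non-constant on $I$, so there exist $\mu_{1},\mu_{2}\in I$ with $\psi(\mu_{1}) \neq \psi(\mu_{2})$. By the uniform asymptotic, $|G(\mu_{2}) - G(\mu_{1})| = |\psi(\mu_{2})-\psi(\mu_{1})|/(1-\gamma) + o(1/(1-\gamma)) \to \infty$, so there are $\overline\gamma<1$ and $\overline\varepsilon>0$ making this gap at least $1$ for all $\gamma>\overline\gamma$ and $\varepsilon<\overline\varepsilon$. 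Since $G$ is continuous on the connected set $I$, its image is an interval containing two points at distance $\geq 1$, hence contains an integer $K$; any preimage $\hat\mu^{C}\in I$ then satisfies (\ref{feasible}) for that $K$ and obeys $|\hat\mu^{C}-\mu^{C}|\leq\delta<\Delta$. The main obstacle is the second step: one must make the two limits defining the $o(1)$ genuinely uniform in $\mu$ across the window and control them simultaneously with the $1/(1-\gamma)$ factor, and one must sidestep the non-monotonicity of $\psi$ (its interior maximum at $1-1/e$) when selecting the two comparison points, which is why I extract distinctness of $\psi$-values rather than relying on monotonicity of $G$.
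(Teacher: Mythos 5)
Your proof is correct, and while it shares the paper's overall skeleton---both arguments study the same function (your $G$ is the paper's $\tilde{K}(\gamma,\varepsilon,\cdot)$ from Equation (\ref{Defining K Equation})), reduce feasibility to $\tilde{K}$ hitting an integer, show that $\tilde{K}$ sweeps a range of length at least $1$ across a small window around $\mu^{C}$, and finish with the intermediate value theorem---the mechanism by which you obtain the length-$\geq 1$ sweep is genuinely different. The paper works at the level of derivatives: it introduces the auxiliary function $d(\gamma,\varepsilon,\mu^{C})$, proves it is continuous up to the boundary $\gamma=1$ (Lemma \ref{Continuity Result}) with $d(1,0,\mu^{C})=1+\ln(1-\mu^{C})$ vanishing only at $1-1/e$ (Lemma \ref{Zeros Result}), and deduces that $|\partial\tilde{K}/\partial\mu^{C}| = |d|/\bigl((1-\mu^{C})|\ln\beta|\bigr)$ blows up near any $\mu^{C}\neq 1-1/e$. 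This forces the paper's proof to open with ``fix $\tilde{\mu}^{C}\neq 1-1/e$,'' so the exceptional point is only covered implicitly (by approximating it with nearby points, a step the paper does not spell out). You instead work at the level of values, deriving the uniform asymptotic $G=(\psi+o(1))/(1-\gamma)$ with $\psi(\mu)=-(1-\mu)\ln(1-\mu)$, and exploit only that $\psi$ is non-constant on every subinterval; this handles all $\mu^{C}\in(0,1)$, including $1-1/e$, in one stroke and needs no differentiation of $G$ at all. The two routes are tightly linked---note that $d(1,0,\mu)=1+\ln(1-\mu)=\psi'(\mu)$, so the paper's criterion is exactly the differentiated form of your asymptotic, and your ``integrated'' version is precisely what lets you sidestep the zero of $\psi'$. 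What the paper's heavier machinery buys in exchange is reuse: the function $d$, its continuity, and the limit computation reappear in the proof of Theorem \ref{Evolutionary Stability Result} (via Equation (\ref{Limit Derivative Result})), where the non-vanishing of $1+\ln(1-\tilde{\mu})$ is needed to invoke the implicit function theorem, whereas your argument is self-contained but single-purpose.
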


\subsubsection*{Proof of Lemma \ref{Equilibrium Characterization}}

The feasibility condition of Lemma \ref{Equilibrium Characterization} comes from the following lemma.
\begin{lemma}
In a $GrimK$ equilibrium with cooperator share $\mu^{C}$, $\mu_{k} = \beta(\gamma, \varepsilon, \mu^{C})^{k} (1 - \beta(\gamma, \varepsilon, \mu^{C}))$ for all $k < K$.
\label{Steady State Shares Lemma}
\end{lemma}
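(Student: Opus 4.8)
The plan is to derive the steady-state balance equations for the shares at each cooperator record and solve them recursively. First I would pin down the one-period transition probabilities of a cooperator. Under $GrimK$ with random matching in a continuum, a player with record $k<K$ meets a cooperator with probability exactly $\mu^C$, in which case they play $C$ (so their record stays at $k$ with probability $1-\varepsilon$ and advances to $k+1$ with probability $\varepsilon$), and meets a defector with probability $1-\mu^C$, in which case they play $D$ and advance to $k+1$ with certainty. Multiplying by the survival probability $\gamma$, such a player remains at record $k$ next period with probability $\gamma(1-\varepsilon)\mu^C$ and advances to $k+1$ with probability $\gamma\bigl(1-(1-\varepsilon)\mu^C\bigr)$; the remaining mass $1-\gamma$ dies.

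Next I would write the steady-state condition $f_{\mathbf{s}}(\mu)=\mu$ at each cooperator record. Because records never decrease and defectors (record $\geq K$) cannot return to a cooperator record, the only inflows into a record $0<k<K$ come from players who were at $k-1$ and advanced, or at $k$ and stayed (both $k-1$ and $k$ are cooperator records since $k\leq K-1$). This yields, for $0<k<K$,
\[
\mu_k = \gamma\bigl(1-(1-\varepsilon)\mu^C\bigr)\mu_{k-1} + \gamma(1-\varepsilon)\mu^C\,\mu_k ,
\]
while at $k=0$ the newborn inflow $1-\gamma$ replaces the inflow from below:
\[
\mu_0 = (1-\gamma) + \gamma(1-\varepsilon)\mu^C\,\mu_0 .
\]

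Solving the $k=0$ equation gives $\mu_0 = (1-\gamma)/\bigl(1-\gamma(1-\varepsilon)\mu^C\bigr)$, which a short algebraic check shows equals $1-\beta(\gamma,\varepsilon,\mu^C)$. Rearranging the equation for $0<k<K$ gives the recursion $\mu_k = \beta(\gamma,\varepsilon,\mu^C)\,\mu_{k-1}$, since the coefficient ratio $\gamma(1-(1-\varepsilon)\mu^C)/(1-\gamma(1-\varepsilon)\mu^C)$ is exactly $\beta$ by its definition in~(\ref{beta Expression}). Chaining this recursion from the base case then yields $\mu_k = \beta^k(1-\beta)$ for all $k<K$ by induction.

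I do not expect a serious obstacle: the result is an exercise in careful flow accounting. The one point requiring care is confirming that no inflow into a cooperator record is omitted—in particular, that the inflow into record $k$ (for $0<k<K$) comes solely from records $k-1$ and $k$—which follows because records are non-decreasing, defectors never return to cooperator records, and newborns enter only at record $0$. As a consistency check, summing the derived shares over cooperator records gives $\sum_{k=0}^{K-1}\beta^k(1-\beta) = 1-\beta^K$, which is precisely the feasibility condition~(\ref{feasible}) identifying this sum with $\mu^C$; this also matches the stated interpretation of $\beta$ as the probability that a player at a cooperator record survives long enough to advance one step, namely (survive-and-advance) divided by $(1-\text{survive-and-stay})$.
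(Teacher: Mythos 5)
Your proof is correct and takes essentially the same approach as the paper's: it writes the steady-state balance condition at record $0$ (newborn inflow $1-\gamma$ against outflow $(1-\gamma(1-\varepsilon)\mu^{C})\mu_{0}$) and at each record $0<k<K$ (inflow $\gamma(1-(1-\varepsilon)\mu^{C})\mu_{k-1}$ against the same outflow rate), then solves the recursion $\mu_{k}=\beta(\gamma,\varepsilon,\mu^{C})\,\mu_{k-1}$ from the base case $\mu_{0}=1-\beta(\gamma,\varepsilon,\mu^{C})$. Your phrasing via the update map $f_{\mathbf{s}}(\mu)=\mu$ rather than explicit inflow/outflow accounting is an equivalent rearrangement of the same equations, and your closing consistency check against the feasibility condition~(\ref{feasible}) matches the remark the paper makes immediately after the lemma.
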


To see why the feasibility condition of Lemma \ref{Equilibrium Characterization} comes from Lemma \ref{Steady State Shares Lemma}, note that
\begin{equation*}
\mu^{C} = \sum_{k = 0}^{K - 1} \beta(\gamma, \varepsilon, \mu^{C})^{k} (1 - \beta(\gamma, \varepsilon, \mu^{C})) = 1 - \beta(\gamma, \varepsilon, \mu^{C})^{K} .
\end{equation*}

\begin{proof}[Proof of Lemma \ref{Steady State Shares Lemma}]

The inflow into record $0$ is $1 - \gamma$, while the outflow from record $0$ is $(1 - \gamma (1 - \varepsilon) \mu^{C}) \mu_{0}$. Setting these equal gives
\begin{equation*}
\mu_{0} = \frac{1 - \gamma}{1 - \gamma (1 - \varepsilon) \mu^{C}} = 1 - \beta(\gamma, \varepsilon, \mu^{C}) .
\end{equation*}
Additionally, for every $0 < k < K$, the inflow into record $k$ is $\gamma (1 - (1 - \varepsilon) \mu^{C}) \mu_{k - 1}$, while the outflow from record $k$ is $(1 - \gamma (1 - \varepsilon) \mu^{C}) \mu_{k}$. Setting these equal gives
\begin{equation*}
\mu_{k} = \frac{\gamma (1 - (1 - \varepsilon) \mu^{C})}{1 - \gamma (1 - \varepsilon) \mu^{C}} \mu_{k - 1} = \beta(\gamma, \varepsilon, \mu^{C}) \mu_{k - 1} .
\end{equation*}
Combining this with $\mu_{0} = 1 - \beta(\gamma, \varepsilon, \mu^{C})$ gives $\mu_{k} = \beta(\gamma, \varepsilon, \mu^{C})^{k} (1 - \beta(\gamma, \varepsilon, \mu^{C}))$ for $0 \leq k \leq K - 1$. \end{proof}

The incentive constraint (\ref{ICC}) of Lemma \ref{ICD} guarantees that a record-$0$ cooperator plays $C$ against an opponent playing $C$, and the incentive constraint (\ref{Equilibrium Characterization}) guarantees that a record-$(K - 1)$ cooperator plays $D$ against an opponent playing $D$. Record-$0$ cooperators are the cooperators most tempted to defect against a cooperative opponent and record-$(K - 1)$ cooperators are the cooperators most tempted to cooperate against a defecting opponent, so these constraints guarantee the incentives of all cooperators are satisfied. 
\begin{lemma}
In a $GrimK$ equilibrium with cooperator share $\mu^{C}$,
\begin{equation*}
V_{k} = 
\begin{cases}
(1 - \beta(\gamma, \varepsilon, \mu^{C})^{K - k}) \mu^{C} & \text{if } k < K \\
0 & \text{if } k \geq K .
\end{cases}
\end{equation*}
\label{Value Functions Lemma}
\end{lemma}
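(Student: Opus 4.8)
The plan is to compute $V_k$ directly from the recursive (Bellman) structure of the continuation payoff, using the flow payoffs and own-record transitions that $GrimK$ induces in a steady state with cooperator share $\mu^C$. The first step is to pin down the flow payoffs $\pi_k$. A cooperator (record $k < K$) plays $C$ exactly when matched with another cooperator, which occurs with probability $\mu^C$ and yields $u(C,C)=1$; when matched with a defector (probability $1-\mu^C$) she plays $D$ against $D$, yielding $u(D,D)=0$. Hence $\pi_k = \mu^C$ for every $k < K$. A defector (record $k \geq K$) always plays $D$ and always faces $D$, so $\pi_k = 0$ for every $k \geq K$.

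Next I would write down the one-step transition of a player's own record. For $k < K$ the record stays at $k$ precisely when the player meets a cooperator and is not hit by noise, i.e. with probability $(1-\varepsilon)\mu^C =: q$, and otherwise advances to $k+1$ (noise with a cooperative partner, or meeting a defector and playing $D$), for total advance probability $1-q$. For $k \geq K$ the record advances with probability $1$. Unwinding the definition of $V_k$ yields the Bellman equation $V_k = (1-\gamma)\pi_k + \gamma \sum_{k''} P(k \to k'') V_{k''}$, where $P(k\to k'')$ denotes this transition.

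For $k \geq K$ the Bellman equation reads $V_k = \gamma V_{k+1}$ (since $\pi_k = 0$ and the record can only advance within the defector region); as payoffs are bounded and $\gamma < 1$, this forces $V_k = 0$ for all $k \geq K$, giving the second case and the boundary condition $V_K = 0$. For $k < K$ it becomes $V_k = (1-\gamma)\mu^C + \gamma(q V_k + (1-q) V_{k+1})$. Solving for $V_k$ and using the identities $(1-\gamma)/(1-\gamma q) = 1-\beta$ (which is exactly the value of $\mu_0$ from Lemma \ref{Steady State Shares Lemma}) and $\gamma(1-q)/(1-\gamma q) = \beta$ collapses this to the clean one-step recursion $V_k = (1-\beta)\mu^C + \beta V_{k+1}$. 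I would then finish by backward induction from $V_K = 0$: the base case $V_{K-1} = (1-\beta)\mu^C$ matches $(1-\beta^{K-k})\mu^C$ at $k = K-1$, and the inductive step gives $V_k = (1-\beta)\mu^C + \beta(1-\beta^{K-k-1})\mu^C = (1-\beta^{K-k})\mu^C$.

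The computation is essentially routine once the structure is set up; the only steps demanding care are the transition probabilities and the algebraic bookkeeping. In particular, one must remember that a cooperator who meets a defector plays $D$ and therefore advances with probability $1$ (rather than with the noise probability $\varepsilon$), and one must recognize the two coefficients in the simplified Bellman equation as $1-\beta$ and $\beta$, so that the backward recursion telescopes into the geometric expression $1-\beta^{K-k}$.
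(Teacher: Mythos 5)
Your proof is correct and follows essentially the same route as the paper: derive the flow payoffs and own-record transitions under $GrimK$, write the Bellman equation $V_k = (1-\gamma)\mu^C + \gamma(1-\varepsilon)\mu^C V_k + \gamma(1-(1-\varepsilon)\mu^C)V_{k+1}$, recognize the coefficients as $1-\beta$ and $\beta$, and unroll from the boundary condition $V_K = 0$. Your treatment of the $k \geq K$ case (via $V_k = \gamma V_{k+1}$ and boundedness) is slightly more roundabout than the paper's direct observation that all future flow payoffs vanish, but it is equally valid.
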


To see why the incentive constraints of Lemma \ref{Equilibrium Characterization} come from Lemma \ref{Value Functions Lemma}, note that the expected continuation payoff of a record-$0$ player from playing $C$ is $(1 - \varepsilon) V_{0} + \varepsilon V_{1}$, while the expected continuation payoff from playing $D$ is $V_{1}$. Thus, a record $0$ player strictly prefers to play $C$ against an opponent playing $C$ iff $(1 - \varepsilon) \gamma (V_{0} - V_{1}) / (1 - \gamma) > g$. Combining Lemmas \ref{Steady State Shares Lemma} and \ref{Value Functions Lemma} gives
\begin{equation*}
(1 - \varepsilon) \frac{\gamma}{1 - \gamma} (V_{0} - V_{1}) = \frac{1 - \varepsilon}{1 - (1 - \varepsilon) \mu^{C}} \beta(\gamma, \varepsilon, \mu^{C})^{K} \mu^{C} = \frac{(1 - \varepsilon) (1 - \mu^{C})}{1 - (1 - \varepsilon) \mu^{C}} \mu^{C} ,
\end{equation*}
so (\ref{ICC}) follows. Moreover, the expected continuation payoff of a record $K - 1$ player from playing $C$ is $(1 - \varepsilon) V_{K - 1} + \varepsilon V_{K}$, while the expected continuation payoff from playing $D$ is $V_{K}$. Thus, a record $K - 1$ player strictly prefers to play $D$ against an opponent playing $D$ iff $(1 - \varepsilon) \gamma (V_{K - 1} - V_{K}) / (1 - \gamma) < l$. Lemma \ref{Value Functions Lemma} gives
\begin{equation*}
(1 - \varepsilon) \frac{\gamma}{1 - \gamma} (V_{K - 1} - V_{K}) = \frac{\gamma (1 - \varepsilon) \mu^{C}}{1 - \gamma (1 - \varepsilon) \mu^{C}} ,
\end{equation*}
and setting this to be less than $l$ gives (\ref{ICD}).

\begin{proof}[Proof of Lemma \ref{Value Functions Lemma}]

The flow payoff for any record $k \geq K$ is $0$, so $V_{k} = 0$ for $k \geq K$. For $k < K$, $V_{k} = (1 - \gamma) \mu^{C} + \gamma (1 - \varepsilon) \mu^{C} V_{k} + \gamma (1 - (1 - \varepsilon) \mu^{C}) V_{k + 1}$, which gives $V_{k} = (1 - \beta(\gamma, \varepsilon, \mu^{C})) \mu^{C} + \beta(\gamma, \varepsilon, \mu^{C}) V_{k + 1}$. Combining this with $V_{K} = 0$ gives $V_{k} = (1 - \beta(\gamma, \varepsilon, \mu^{C})^{K - k}) \mu^{C}$ for $k < K$. \end{proof}

\subsubsection*{Proof of Lemma \ref{Denseness Result}}

Let $\tilde{K}: (0, 1) \times (0,1) \times (0, 1) \rightarrow \mathbb{R}_{+}$ be the function given by
\begin{equation}
\tilde{K}(\gamma, \varepsilon, \mu^{C}) = \frac{\ln(1 - \mu^{C})}{\ln(\beta(\gamma, \varepsilon, \mu^{C}))} .
\label{Defining K Equation}
\end{equation}
By construction, $\tilde{K}(\gamma, \varepsilon, \mu^{C})$ is the unique $K \in \mathbb{R}_{+}$ such that $\mu^{C} = 1 - \beta(\gamma, \varepsilon, \mu^{C})^{K}.$ Let $d: (0,1] \times [0,1) \times (0,1) \rightarrow \mathbb{R}$ be the function given by
\begin{equation*}
d(\gamma, \varepsilon, \mu^{C}) =
\begin{cases}
1 + \ln(1 - \mu^{C}) (1 - \mu^{C}) \frac{\frac{\partial \beta}{\partial \mu^{C}}(\gamma, \varepsilon, \mu^{C})}{\beta(\gamma, \varepsilon, \mu^{C}) \ln(\beta(\gamma, \varepsilon, \mu^{C}))} & \text{if } \gamma < 1 \\
1 + \frac{(1 - \varepsilon) \ln(1 - \mu^{C}) (1 - \mu^{C})}{1 - (1 - \varepsilon) \mu^{C}} & \text{if } \gamma = 1
\end{cases} .
\end{equation*}

The $\mu^{C}$ derivative of $\tilde{K}(\gamma, \varepsilon, \mu^{C})$ is related to $d(\gamma, \varepsilon, \mu^{C})$ by the following lemma.
\begin{lemma}
$\tilde{K}: (0,1) \times (0,1) \times (0,1) \rightarrow \mathbb{R}_{+}$ is differentiable in $\mu^{C}$ with derivative given by
\begin{equation*}
\frac{\partial \tilde{K}}{\partial \mu^{C}}(\gamma, \varepsilon, \mu^{C}) = - \frac{d(\gamma, \varepsilon, \mu^{C})}{(1 - \mu^{C}) \ln(\beta(\gamma, \varepsilon, \mu^{C}))} .
\end{equation*}
\label{Derivative Result}
\end{lemma}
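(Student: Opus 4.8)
The plan is to treat $\tilde{K}$ as the ratio of two elementary functions of $\mu^{C}$ and apply the quotient rule directly; no structural insight is needed beyond checking that the relevant logarithms and their denominators are well-behaved, so this is pure calculus.

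First I would record the regularity facts. On the domain $(0,1)^{3}$ the map $\beta$ defined in (\ref{beta Expression}) is a rational function of $\mu^{C}$ with nonvanishing denominator $1-\gamma(1-\varepsilon)\mu^{C}$, hence differentiable in $\mu^{C}$; moreover $\beta$ takes values in $(0,1)$, so $\ln\beta$ is well-defined, strictly negative, and in particular nonzero. Likewise $1-\mu^{C}\in(0,1)$, so $\ln(1-\mu^{C})$ is well-defined and differentiable. Writing $\tilde{K}=\ln(1-\mu^{C})/\ln\beta$ as a quotient of differentiable functions with nonvanishing denominator, $\tilde{K}$ is differentiable in $\mu^{C}$.

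Next I would compute. Differentiating the numerator gives $\partial_{\mu^{C}}\ln(1-\mu^{C})=-1/(1-\mu^{C})$, and the chain rule gives $\partial_{\mu^{C}}\ln\beta=(\partial\beta/\partial\mu^{C})/\beta$. The quotient rule then yields
\[
\frac{\partial \tilde{K}}{\partial \mu^{C}}=\frac{-\dfrac{\ln\beta}{1-\mu^{C}}-\ln(1-\mu^{C})\dfrac{\partial\beta/\partial\mu^{C}}{\beta}}{(\ln\beta)^{2}} .
\]
Splitting this single fraction into two summands, the first simplifies (using $\ln\beta/(\ln\beta)^{2}=1/\ln\beta$) to $-1/\bigl((1-\mu^{C})\ln\beta\bigr)$, and the second is $-\ln(1-\mu^{C})(\partial\beta/\partial\mu^{C})/\bigl(\beta(\ln\beta)^{2}\bigr)$.

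Finally I would factor out the common term $-1/\bigl((1-\mu^{C})\ln\beta\bigr)$. The first summand is already in that form, contributing the ``$1$'' inside $d$. For the second summand, multiplying and dividing by $(1-\mu^{C})\ln\beta$ rewrites it as $-1/\bigl((1-\mu^{C})\ln\beta\bigr)$ times $\ln(1-\mu^{C})(1-\mu^{C})(\partial\beta/\partial\mu^{C})/(\beta\ln\beta)$, which is exactly the second term of $d(\gamma,\varepsilon,\mu^{C})$ in the $\gamma<1$ branch. Collecting the common factor gives $\partial\tilde{K}/\partial\mu^{C}=-d(\gamma,\varepsilon,\mu^{C})/\bigl((1-\mu^{C})\ln\beta\bigr)$, the claimed identity. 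There is no genuine obstacle; the only care required is in the bookkeeping of signs and in matching the algebra to the prescribed definition of $d$. The $\gamma=1$ branch of $d$ plays no role here, since the lemma is stated only for $\gamma\in(0,1)$; it is presumably included because it is the relevant limit for the continuity argument used later in the proof of Lemma \ref{Denseness Result}.
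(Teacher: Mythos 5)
Your proposal is correct and follows essentially the same route as the paper: both differentiate $\tilde{K}=\ln(1-\mu^{C})/\ln\beta$ by the quotient rule and then algebraically rearrange to expose the factor $d(\gamma,\varepsilon,\mu^{C})$ over $(1-\mu^{C})\ln\beta$ (the paper does this in one step by scaling numerator and denominator by $(1-\mu^{C})/\ln\beta$, you by splitting into two summands and factoring out the common term, which is the same computation). Your added remarks on well-definedness ($\beta\in(0,1)$ so $\ln\beta\neq 0$) and on the irrelevance of the $\gamma=1$ branch of $d$ here are accurate.
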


\begin{proof}[Proof of Lemma \ref{Derivative Result}]
From (\ref{Defining K Equation}), it follows that $\tilde{K}(\gamma, \varepsilon, \mu^{C})$ is differentiable in $\mu^{C}$ with derivative given by

\begin{align*}
\frac{\partial \tilde{K}}{\partial \mu^{C}}(\gamma, \varepsilon, \mu^{C}) & = - \frac{\frac{\ln(\beta(\gamma, \varepsilon, \mu^{C}))}{1 - \mu^{C}} + \frac{\ln(1 - \mu^{C}) \frac{\partial \beta}{\partial \mu^{C}}(\gamma, \varepsilon, \mu^{C})}{\beta(\gamma, \varepsilon, \mu^{C})}}{\ln(\beta(\gamma, \varepsilon, \mu^{C}))^{2}} \\
& = - \frac{1 + \ln(1 - \mu^{C}) (1 - \mu^{C}) \frac{\frac{\partial \beta}{\partial \mu^{C}}(\gamma, \varepsilon, \mu^{C})}{\beta(\gamma, \varepsilon, \mu^{C}) \ln(\beta(\gamma, \varepsilon, \mu^{C}))}}{(1 - \mu^{C}) \ln(\beta(\gamma, \varepsilon, \mu^{C}))} \\
& = - \frac{d(\gamma, \varepsilon, \mu^{C})}{(1 - \mu^{C}) \ln(\beta(\gamma, \varepsilon, \mu^{C}))} . \qedhere
\end{align*}
\end{proof}

The following two lemmas concern properties of $d(\gamma, \varepsilon, \mu^{C})$ that will be useful for the proof of Lemma \ref{Denseness Result}.

\begin{lemma}
$d: (0,1] \times [0,1) \times (0,1) \rightarrow \mathbb{R}$ is well-defined and continuous.
\label{Continuity Result}
\end{lemma}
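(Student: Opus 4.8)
The plan is to collapse the two branches of the definition of $d$ into a single expression that is manifestly continuous, so that the only genuine work is showing that the $\gamma<1$ formula extends continuously across the face $\gamma=1$. First I would dispatch well-definedness. On the domain $(0,1]\times[0,1)\times(0,1)$ one has $\gamma(1-\varepsilon)\mu^{C}<1$ and $(1-\varepsilon)\mu^{C}<1$, so both $1-\gamma(1-\varepsilon)\mu^{C}$ and $1-(1-\varepsilon)\mu^{C}$ are strictly positive; hence the $\gamma=1$ branch is well-defined, and $\beta(\gamma,\varepsilon,\mu^{C})>0$. A one-line computation gives the identity $1-\beta=(1-\gamma)/(1-\gamma(1-\varepsilon)\mu^{C})$, which shows $\beta\in(0,1)$ precisely when $\gamma<1$ and $\beta=1$ exactly when $\gamma=1$; in particular $\ln\beta<0$ for $\gamma<1$, so nothing in the $\gamma<1$ branch divides by zero.

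Second, I would compute the two ingredients of the $\gamma<1$ branch in closed form. Differentiating $\beta$ in $\mu^{C}$ yields $\partial\beta/\partial\mu^{C}=-\gamma(1-\varepsilon)(1-\gamma)/(1-\gamma(1-\varepsilon)\mu^{C})^{2}$, and dividing by $\beta$ gives $(\partial\beta/\partial\mu^{C})/\beta=-(1-\varepsilon)(1-\gamma)/[(1-\gamma(1-\varepsilon)\mu^{C})(1-(1-\varepsilon)\mu^{C})]$. Substituting the identity $1-\gamma=(1-\beta)(1-\gamma(1-\varepsilon)\mu^{C})$ into the extra term of the $\gamma<1$ branch cancels the factor $1-\gamma(1-\varepsilon)\mu^{C}$, and I obtain the unified expression
\[
d(\gamma,\varepsilon,\mu^{C})=1+\frac{(1-\varepsilon)(1-\mu^{C})\ln(1-\mu^{C})}{1-(1-\varepsilon)\mu^{C}}\,h\bigl(\beta(\gamma,\varepsilon,\mu^{C})\bigr),
\]
valid for $\gamma<1$, where $h(x):=(x-1)/\ln x$ for $x\in(0,1)$ and $h(1):=1$.

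Third, I would verify that this unified expression also reproduces the $\gamma=1$ branch: there $\beta=1$, so $h(\beta)=1$ and the formula collapses to exactly $1+(1-\varepsilon)(1-\mu^{C})\ln(1-\mu^{C})/(1-(1-\varepsilon)\mu^{C})$. Hence the displayed formula holds on the entire domain. Its prefactor is continuous, since its denominator never vanishes and $1-\mu^{C}>0$ keeps $\ln(1-\mu^{C})$ finite; and $\beta$ is continuous on the whole domain with $\beta(1,\varepsilon,\mu^{C})=1$. The only remaining point is continuity of $h$ at $x=1$, where it has the form $0/0$: a single application of L'Hôpital (or the expansion $\ln x=(x-1)+O((x-1)^{2})$) gives $\lim_{x\to1}h(x)=1=h(1)$, so $h$ is continuous on $(0,1]$. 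Continuity of $d$ then follows as a product and composition of continuous maps.

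The main obstacle is precisely this indeterminacy across $\gamma=1$: as $\gamma\to1$, both $\partial\beta/\partial\mu^{C}$ and $\ln\beta$ tend to $0$, so the $\gamma<1$ formula is a $0/0$ limit and termwise continuity cannot be invoked. The entire purpose of the algebra in the second step is to expose the common factor $1-\gamma$ (equivalently $1-\beta$) in numerator and denominator and repackage it as the single scalar $h(\beta)$, after which the limit is settled once and for all by the elementary fact that $(x-1)/\ln x\to1$ as $x\to1$.
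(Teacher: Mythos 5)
Your proof is correct, and it turns on the same cancellation as the paper's: both arguments use the identity $1-\gamma=(1-\beta)(1-\gamma(1-\varepsilon)\mu^{C})$ to trade the vanishing factor $1-\gamma$ for $1-\beta$, and both settle the resulting $0/0$ indeterminacy at $\gamma=1$ with the elementary limit $(x-1)/\ln x\to 1$ as $x\to 1$ (which the paper phrases as $\ln\beta=\beta-1+O((\beta-1)^{2})$). The difference is organizational. The paper factors only partially, writing the problematic ratio as $-\bigl(\gamma(1-\varepsilon)/[\beta(1-\gamma(1-\varepsilon)\mu^{C})]\bigr)\cdot\bigl((1-\beta)/\ln\beta\bigr)$, and then proves continuity at the face $\gamma=1$ by a boundary-limit computation: it notes that $d(1,\varepsilon,\mu^{C})$ is continuous in $(\varepsilon,\mu^{C})$ and computes the joint limit of each factor along $\tilde{\gamma}\neq 1$. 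You instead divide by $\beta$ before factoring, so the factor $1-\gamma(1-\varepsilon)\mu^{C}$ cancels completely, the prefactor $(1-\varepsilon)(1-\mu^{C})\ln(1-\mu^{C})/(1-(1-\varepsilon)\mu^{C})$ is $\gamma$-free, and all remaining $\gamma$-dependence is concentrated in the single scalar $h(\beta)$, where $h(x)=(x-1)/\ln x$ for $x\in(0,1)$ and $h(1)=1$; this collapses the two branches of the definition of $d$ into one formula valid on the entire domain, so continuity reduces to continuity of compositions of elementary maps, the only analytic input being that $h$ is continuous on $(0,1]$. Your route buys a cleaner conclusion---the case split in the definition of $d$ is exposed as an artifact, and no sequential limit computation along the boundary is needed---while the paper's route demands less algebraic foresight, at the price of verifying two joint limits at $\gamma=1$. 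Both arguments are fully rigorous.
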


\begin{proof}[Proof of Lemma \ref{Continuity Result}]

Since $\beta(\gamma, \varepsilon, \mu^{C})$ is differentiable and only takes values in $(0,1)$, it follows that $d(\gamma, \varepsilon, \mu^{C})$ is well-defined. Moreover, since $\beta(\gamma, \varepsilon, \mu^{C})$ is continuously differentiable for all $\mu^{C} \in (0,1)$, $d(\gamma, \varepsilon, \mu^{C})$ is continuous for $\gamma < 1$. All that remains is to check that $d(\gamma, \varepsilon, \mu^{C})$ is continuous for $\gamma = 1$. 

First, note that $d(1, \varepsilon, \mu^{C})$ is continuous in $(\varepsilon, \mu^{C})$. Thus, we need only check the limit in which $\gamma$ approaches 1, but never equals $1$. Note that
\begin{equation}
\begin{split}
\frac{\frac{\partial \beta}{\partial \mu^{C}}(\gamma, \varepsilon, \mu^{C})}{\beta(\gamma, \varepsilon, \mu^{C}) \ln(\beta(\gamma, \varepsilon, \mu^{C}))} & = - \frac{\frac{\gamma (1 - \varepsilon) (1 - \gamma)}{(1 - \gamma (1 - \varepsilon) \mu^{C})^{2}}}{\beta(\gamma, \varepsilon, \mu^{C}) \ln(\beta(\gamma, \varepsilon, \mu^{C}))} \\
& = - \Big( \frac{\gamma (1 - \varepsilon)}{\beta(\gamma, \varepsilon, \mu^{C}) (1 - \gamma(1 - \varepsilon) \mu^{C})} \Big) \Big( \frac{1 - \beta(\gamma, \varepsilon, \mu^{C})}{\ln(\beta(\gamma, \varepsilon, \mu^{C}))} \Big) .
\end{split}
\label{Continuity Equation}
\end{equation}
It is clear that
\begin{equation}
\lim_{\substack{(\tilde{\gamma}, \tilde{\varepsilon}, \tilde{\mu}^{C}) \rightarrow (1, \varepsilon, \mu^{C}) \\ \tilde{\gamma} \neq 1}} \frac{\tilde{\gamma} (1 - \tilde{\varepsilon})}{\beta(\tilde{\gamma}, \tilde{\varepsilon}, \tilde{\mu^{C}}) (1 - \tilde{\gamma}(1 - \tilde{\varepsilon}) \tilde{\mu}^{C})} = \frac{1 - \varepsilon}{(1 - (1 - \varepsilon) \mu^{C})}
\label{Limit Equation 1}
\end{equation}
for all $(\varepsilon, \mu^{C}) \in [0,1) \times (0,1)$. For $\gamma$ close to $1$,
\begin{equation*}
\ln(\beta(\gamma, \varepsilon, \mu^{C})) = \beta(\gamma, \varepsilon, \mu^{C}) - 1 + O((\beta(\gamma, \varepsilon, \mu^{C}) - 1)^{2}) .
\end{equation*}
Thus,
\begin{equation}
\lim_{\substack{(\tilde{\gamma}, \tilde{\varepsilon}, \tilde{\mu}) \rightarrow (1, \varepsilon, \mu^{C}) \\ \tilde{\gamma} \neq 1}} \frac{1 - \beta(\tilde{\gamma}, \tilde{\varepsilon}, \tilde{\mu}^{C})}{\ln(\beta(\tilde{\gamma}, \tilde{\varepsilon}, \tilde{\mu}^{C}))} = - 1
\label{Limit Equation 2}
\end{equation}
for all $(\varepsilon, \mu^{C}) \in [0,1) \times (0,1)$. Equations \ref{Continuity Equation}, \ref{Limit Equation 1}, and \ref{Limit Equation 2} together imply that $d(\gamma, \varepsilon, \mu^{C})$ is continuous for $\gamma = 1$.
\end{proof}

\begin{lemma}
$d(1, 0, \mu^{C})$ has precisely one zero in $\mu^{C} \in (0,1)$, and the zero is located at $\mu^{C} = 1 - 1/e$.
\label{Zeros Result}
\end{lemma}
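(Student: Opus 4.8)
The plan is to exploit the explicit $\gamma = 1$ branch of the definition of $d$, which at $\varepsilon = 0$ collapses to an elementary expression. First I would substitute $\gamma = 1$ and $\varepsilon = 0$ directly into
\begin{equation*}
d(1, \varepsilon, \mu^{C}) = 1 + \frac{(1 - \varepsilon) \ln(1 - \mu^{C}) (1 - \mu^{C})}{1 - (1 - \varepsilon) \mu^{C}} .
\end{equation*}
Setting $\varepsilon = 0$ turns the factor $(1 - \varepsilon)$ into $1$ and turns the denominator $1 - (1 - \varepsilon) \mu^{C}$ into $1 - \mu^{C}$, which cancels against the $(1 - \mu^{C})$ appearing in the numerator. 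This leaves the clean expression $d(1, 0, \mu^{C}) = 1 + \ln(1 - \mu^{C})$.

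From here the two claims follow immediately. For uniqueness, I would observe that $\mu^{C} \mapsto 1 - \mu^{C}$ is strictly decreasing on $(0,1)$ and $\ln$ is strictly increasing, so the composition $\mu^{C} \mapsto \ln(1 - \mu^{C})$, and hence $d(1, 0, \cdot)$, is strictly decreasing; a strictly monotone function has at most one zero. For the location, I would solve $1 + \ln(1 - \mu^{C}) = 0$, equivalently $\ln(1 - \mu^{C}) = -1$, equivalently $1 - \mu^{C} = 1/e$, giving $\mu^{C} = 1 - 1/e$. Since $0 < 1/e < 1$, this value lies in $(0,1)$, so the unique zero is indeed attained in the required interval.

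There is essentially no obstacle here: the entire content is recognizing the cancellation that reduces $d(1, 0, \cdot)$ to $1 + \ln(1 - \mu^{C})$, after which strict monotonicity and a one-line computation finish the argument. The only point requiring minimal care is confirming that the monotonicity is \emph{strict}—so that ``precisely one'' rather than merely ``at most one'' zero holds—which is guaranteed because both $1 - \mu^{C}$ and $\ln$ are strictly monotone on the relevant domain.
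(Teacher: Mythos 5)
Your proposal is correct and matches the paper's argument exactly: the paper's proof is the one-line observation that $d(1,0,\mu^{C}) = 1 + \ln(1 - \mu^{C})$, with the cancellation, strict monotonicity, and solution $\mu^{C} = 1 - 1/e$ left implicit. You have simply spelled out those routine details, which is fine.
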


\begin{proof}[Proof of Lemma \ref{Zeros Result}]

This follows from the fact that $d(1,0,\mu^{C}) = 1 + \ln(1 - \mu^{C})$.
\end{proof}

With these preliminaries established, we now present the proof of Lemma \ref{Denseness Result}.

\begin{proof}[Proof of Lemma \ref{Denseness Result}]
Fix some $\tilde{\mu}^{C} \in (0,1)$ such that $\tilde{\mu}^{C} \neq 1 - 1 / e$. Lemma \ref{Zeros Result} says $d(1, 0, \tilde{\mu}^{C}) \neq 0$. Because of this and the continuity of $d$, there exist some $\lambda > 0$ and some $\delta > 0$, $\overline{\gamma}' < 1$, and $\overline{\varepsilon} > 0$ such that $|d(\gamma, \varepsilon, \mu^{C})| > \lambda$ for all $\gamma > \overline{\gamma}'$, $\varepsilon < \overline{\varepsilon}$, and $|\mu^{C} - \tilde{\mu}^{C}| < \delta$.

Additionally, note that $\lim_{\gamma \rightarrow 1} \inf_{(\varepsilon, \mu^{C}) \in (0,\overline{\varepsilon}) \times (\mu^{C} - \delta,\mu^{C} + \delta)} \beta(\gamma, \varepsilon, \mu^{C}) = 1$. Together these facts imply that there exists some $\overline{\gamma} < 1$ such that
\begin{equation*}
\left|\frac{\partial \tilde{K}}{\partial \mu^{C}}(\gamma, \varepsilon, \mu^{C})\right| = \left|\frac{d(\gamma, \varepsilon, \mu^{C})}{(1 - \mu^{C}) \ln(\beta(\gamma, \varepsilon, \mu^{C}))}\right| > \frac{2}{\min\{\delta, \Delta\}}
\end{equation*}
and $\tilde{K}(\gamma, \varepsilon, \mu^{C}) \geq 1$ for all $\gamma > \overline{\gamma}$, $\varepsilon < \overline{\varepsilon}$, and $|\mu^{C} - \tilde{\mu}^{C}| < \delta$. It thus follows that 
\begin{equation*}
\sup_{|\mu^{C} - \tilde{\mu}^{C}| \leq \min\{\delta, \Delta\}} |\tilde{K}(\gamma, \varepsilon, \mu^{C}) - \tilde{K}(\gamma, \varepsilon, \tilde{\mu}^{C})| > 1
\end{equation*}
for all $\gamma > \overline{\gamma}$, $\varepsilon < \overline{\varepsilon}$. Hence, there exists some $\hat{\mu}^{C}$ within $\Delta$ of $\tilde{\mu}^{C}$ and some non-negative integer $\hat{K}$ such that $\tilde{K}(\gamma, \varepsilon, \hat{\mu}^{C}) = \hat{K}$, which implies that $\hat{\mu}^{C}$ is feasible since $\hat{\mu}^{C} = 1 - \beta(\gamma, \varepsilon, \hat{\mu}^{C})^{\hat{K}}$.
\end{proof}

\section{Convergence of $GrimK$ Strategies}

We now derive a key stability property of $GrimK$ strategies. Fix an arbitrary initial record distribution $\mu^{0}\in \Delta (\mathbb{N})$. When all individuals use $GrimK$ strategies, the population share with record $k$ at time $t$, $\mu_k^t$, evolves according to
\begin{equation}
\begin{split}
\mu_{0}^{t + 1} & = 1 - \gamma + \gamma (1 - \varepsilon) \mu^{C,t} \mu_{0}^{t} , \\
\mu_{k}^{t + 1} & = \gamma (1 - (1 - \varepsilon) \mu^{C,t}) \mu_{k - 1}^{t} + \gamma (1 - \varepsilon) \mu^{C,t} \mu_{k}^{t} \text{ for } 0 < k < K , \\
\end{split}
\label{Governing Dynamics}
\end{equation}
where $\mu^{C,t}=\sum_{k=0}^{K-1}\mu_{k}^t$.

Fixing $K$, we say that distribution $\mu$ \emph{dominates} (or is \emph{more favorable than}) distribution $\tilde{\mu}$ if, for every $k < K$, $\sum_{\tilde{k}=0}^{k}\mu_{\tilde{k}}\geq \sum_{\tilde{k}=0}^{k}\tilde{\mu}_{\tilde{k}}$; that is, if for every $k < K$ the share of the population with record no worse than $k$ is greater under distribution $\mu$ than under distribution $\tilde{\mu}$. Under the $GrimK$ strategy, let $\bar{\mu}$ denote the steady state with the largest share of cooperators, and let $\underline{\mu}$ denote the steady state with the smallest share of cooperators.

\begin{theorem} ~
\begin{enumerate}
    
    \item If $\mu^{0}$ dominates $\bar{\mu}$, then $\lim_{t\to \infty} \mu^{t} = \bar{\mu}.$
    
    \item If $\mu^{0}$ is dominated by $\underline{\mu}$, then $\lim_{t\to \infty} \mu^{t} = \underline{\mu}.$
    
\end{enumerate} \label{convergence theorem}
\end{theorem}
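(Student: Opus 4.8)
The plan is to recast the dynamics (\ref{Governing Dynamics}) in terms of the cumulative shares $M_k^t := \sum_{j=0}^k \mu_j^t$ and to exploit a monotonicity property of the resulting map. Writing $p^t := \gamma(1-\varepsilon)\mu^{C,t}$ (survive and keep one's record) and $a^t := \gamma(1-(1-\varepsilon)\mu^{C,t})$ (survive and advance), so that $\mu^{C,t}=M_{K-1}^t$, summing (\ref{Governing Dynamics}) yields, for every $0\le k<K$,
\[ M_k^{t+1} = (1-\gamma) + a^t M_{k-1}^t + p^t M_k^t, \qquad M_{-1}^t:=0. \]
Thus the vector $(M_0,\dots,M_{K-1})$ evolves autonomously under a map $F:\mathcal D\to\mathcal D$ on the compact convex set $\mathcal D=\{0\le M_0\le\cdots\le M_{K-1}\le 1\}$; records $k\ge K$ never feed back, since a defector's record only increases. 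Crucially, in these coordinates the relation ``$\mu$ dominates $\tilde\mu$'' is exactly the componentwise order $M\ge\tilde M$ on $\mathcal D$, and the steady states $\bar\mu,\underline\mu$ correspond to fixed points $\bar M,\underline M$ of $F$.

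The heart of the argument, and the step I expect to be the main obstacle, is to show that $F$ is monotone for the componentwise order. The difficulty is that the transition probabilities $a^t,p^t$ themselves depend on the state through $\mu^{C,t}=M_{K-1}^t$, so it is not obvious that raising the distribution cannot lower some cumulative share. I would prove monotonicity by computing the partial derivatives of $F_k$ on $\mathcal D$: one finds $\partial F_k/\partial M_{k-1}=a\ge0$, $\partial F_k/\partial M_k=p\ge0$, and, for the feedback coordinate, $\partial F_k/\partial M_{K-1}=\gamma(1-\varepsilon)(M_k-M_{k-1})=\gamma(1-\varepsilon)\mu_k\ge0$ (with the modification $p+\gamma(1-\varepsilon)\mu_{K-1}\ge0$ when $k=K-1$). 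Every partial is non-negative on $\mathcal D$, so along the segment joining $\tilde M\le M$ (which stays in the convex set $\mathcal D$) the mean-value theorem gives $F(\tilde M)\le F(M)$. This is exactly the formalization of the intuition that better records induce more cooperation: the cooperator share $M_{K-1}$ enters only through the non-negative term $\gamma(1-\varepsilon)\mu_k$.

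Granting monotonicity, I would finish by squeezing between the extreme trajectories. Lemma \ref{Steady State Shares Lemma} shows every steady state has the geometric form $M_k=1-\beta^{k+1}$ with $\beta=(1-\mu^C)^{1/K}$, so the cumulative shares are decreasing in $\beta$ and hence increasing in $\mu^C$: the fixed points are totally ordered, with $\bar M$ the componentwise-largest and $\underline M$ the smallest. A direct check shows the top point $M^\top=(1,\dots,1)$ is a super-solution, $F(M^\top)=(1-\gamma\varepsilon,1,\dots,1)\le M^\top$, so $F^t(M^\top)$ decreases monotonically and converges to a fixed point which, dominating every fixed point, must be $\bar M$. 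If $\mu^0$ dominates $\bar\mu$ then $\bar M\le M^0\le M^\top$, whence $\bar M=F^t(\bar M)\le F^t(M^0)\le F^t(M^\top)\to\bar M$, giving $M^t\to\bar M$. Part 2 is symmetric, using the sub-solution $M^\bot=(0,\dots,0)$ with $F(M^\bot)=(1-\gamma,\dots,1-\gamma)\ge M^\bot$ and the squeeze $\underline M\ge M^0\ge M^\bot$.

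Finally I would lift convergence of the truncated vector to convergence of the full record distribution. Convergence of $(M_0^t,\dots,M_{K-1}^t)$ gives $\mu_k^t=M_k^t-M_{k-1}^t\to\bar\mu_k$ for $k<K$ and $\mu^{C,t}\to\bar\mu^C$; the tail then follows from its own recursion $\mu_K^{t+1}=a^t\mu_{K-1}^t$ and $\mu_{K+j}^{t+1}=\gamma\,\mu_{K+j-1}^t$ for $j\ge1$, which converges record-by-record since $a^t\to\bar a$ and $\mu_{K-1}^t\to\bar\mu_{K-1}$. As each coordinate converges and the limit $\bar\mu$ is a probability distribution, Scheff\'e's lemma upgrades this to convergence in total variation, completing the proof.
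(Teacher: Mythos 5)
Your proof is correct and takes essentially the same route as the paper's: both pass to the cumulative shares $x_k=\sum_{j\le k}\mu_j$, show the resulting update map is weakly increasing in the componentwise (dominance) order, and then squeeze the trajectory between the constant trajectory at the fixed point and the monotonically decreasing (resp.\ increasing) trajectory started from $(1,\dots,1)$ (resp.\ $(0,\dots,0)$). Your refinements---proving monotonicity via nonnegative partial derivatives along the segment in the convex ordered domain, using the geometric form of steady states to see the fixed points are totally ordered, and handling the tail records $k\ge K$ with Scheff\'e's lemma---tighten steps the paper states more informally or omits, but do not change the argument.
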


Let $x_{k}=\sum_{\tilde{k}=0}^{k}\mu_{\tilde{k}}$ denote the share of the population with record no worse than $k$. From Equation \ref{Governing Dynamics}, it follows that
\begin{equation}
\begin{split}
x_{0}^{t + 1} & = 1 - \gamma + \gamma (1 - \varepsilon) x_{K - 1}^{t} x_{0}^{t} , \\
x_{k}^{t + 1} & = 1 - \gamma + \gamma x_{k - 1}^{t} + \gamma (1 - \varepsilon) x_{K - 1}^{t} (x_{k}^{t} - x_{k - 1}^{t}) \text{ for } 0 < k < K . \\
\end{split}
\label{Updated Governing Dynamics}
\end{equation}
To see this, note that $x_{0} = \mu_{0}$ and $x_{K - 1} = \mu^{C}$, so rewriting the first line in Equation \ref{Governing Dynamics} gives the first line in Equation \ref{Updated Governing Dynamics}. Additionally, for $0 < k < K$, Equation \ref{Governing Dynamics} gives
\begin{equation*}
\begin{split}
x_{k}^{t + 1} = \sum_{\tilde{k} \leq k} \mu_{\tilde{k}}^{t + 1} & = 1 - \gamma + \gamma \sum_{\tilde{k} \leq k - 1} \mu_{\tilde{k} - 1}^{t} + \gamma (1 - \varepsilon) \mu^{C,t} \mu_{k}^{t} , \\
& = 1 - \gamma + \gamma x_{k - 1}^{t} + \gamma (1 - \varepsilon) x_{K - 1}^{t} (x_{k}^{t} - x_{k - 1}^{t}) .
\end{split}
\end{equation*}

\begin{lemma}
The update map in Equation \ref{Updated Governing Dynamics} is weakly increasing: If $(x_{0}^{t}, ..., x_{K - 1}^{t}) \geq (\tilde{x}_{0}^{t}, ..., \tilde{x}_{K - 1}^{t})$, then $(x_{0}^{t + 1}, ..., x_{K - 1}^{t + 1}) \geq (\tilde{x}_{0}^{t + 1}, ..., \tilde{x}_{K - 1}^{t + 1})$.
\label{Update Function Monotonicity}
\end{lemma}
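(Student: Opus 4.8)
The plan is to show that the update map $F=(F_0,\ldots,F_{K-1})$ given by the right-hand side of Equation \ref{Updated Governing Dynamics} is monotone for the coordinatewise partial order by verifying that every first partial derivative $\partial F_i/\partial x_j$ is nonnegative on the relevant domain. The natural domain is the set of cumulative distribution vectors $D=\{(x_0,\ldots,x_{K-1}):0\le x_0\le x_1\le\cdots\le x_{K-1}\le 1\}$, since each $x_k^t=\sum_{\tilde k\le k}\mu_{\tilde k}^t$ is a cumulative population share; both $x^t$ and $\tilde x^t$ lie in $D$. The key structural fact is that $D$ is cut out by linear inequalities and is therefore convex, which is exactly what lets nonnegative partials deliver monotonicity of the map.

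First I would compute the partials componentwise. For $F_0=1-\gamma+\gamma(1-\varepsilon)x_{K-1}x_0$ the derivatives in $x_0$ and $x_{K-1}$ are $\gamma(1-\varepsilon)x_{K-1}$ and $\gamma(1-\varepsilon)x_0$, both nonnegative on $D$. For $0<k<K$ with $F_k=1-\gamma+\gamma x_{k-1}+\gamma(1-\varepsilon)x_{K-1}(x_k-x_{k-1})$, the three relevant derivatives are $\partial F_k/\partial x_k=\gamma(1-\varepsilon)x_{K-1}\ge 0$, $\partial F_k/\partial x_{K-1}=\gamma(1-\varepsilon)(x_k-x_{k-1})\ge 0$ (using $x_k\ge x_{k-1}$), and $\partial F_k/\partial x_{k-1}=\gamma(1-(1-\varepsilon)x_{K-1})$. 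The last is the only sign that is not immediate; it is nonnegative precisely because $\varepsilon\in(0,1)$ and $x_{K-1}\le 1$ force $(1-\varepsilon)x_{K-1}<1$.

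The one point requiring care, and the main (though mild) obstacle, is the coincidence of indices: when $k=K-1$ the variable $x_{K-1}$ plays the dual role of a state coordinate and the multiplicative ``survival'' factor, and when $K=1$ the single variable $x_0=x_{K-1}$ does so. I would treat these separately. For $k=K-1$ one finds $\partial F_{K-1}/\partial x_{K-1}=\gamma(1-\varepsilon)(2x_{K-1}-x_{K-2})\ge\gamma(1-\varepsilon)x_{K-1}\ge 0$, again using $x_{K-1}\ge x_{K-2}$, while $\partial F_{K-1}/\partial x_{K-2}=\gamma(1-(1-\varepsilon)x_{K-1})\ge 0$ as before; for $K=1$ one has $F_0=1-\gamma+\gamma(1-\varepsilon)x_0^2$ with derivative $2\gamma(1-\varepsilon)x_0\ge 0$. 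All remaining partials vanish identically.

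With every partial derivative shown nonnegative throughout the convex set $D$, monotonicity follows from the fundamental theorem of calculus along the segment joining $\tilde x^t\le x^t$: for each component, $F_i(x^t)-F_i(\tilde x^t)=\int_0^1\sum_j \frac{\partial F_i}{\partial x_j}(\tilde x^t+s(x^t-\tilde x^t))\,(x_j^t-\tilde x_j^t)\,ds$. The segment stays in $D$ by convexity, so each partial in the integrand is nonnegative, and each increment $x_j^t-\tilde x_j^t\ge 0$ by hypothesis; hence the integral is nonnegative and $F_i(x^t)\ge F_i(\tilde x^t)$ for every $i$, which is exactly the claimed inequality $(x_0^{t+1},\ldots,x_{K-1}^{t+1})\ge(\tilde x_0^{t+1},\ldots,\tilde x_{K-1}^{t+1})$.
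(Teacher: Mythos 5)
Your proof is correct and takes essentially the same route as the paper's: the paper likewise argues that each component of the update map is weakly increasing in the variables it depends on, noting that the first line depends only on the product $x_{0}^{t} x_{K-1}^{t}$ (which is increasing in that product) and that the second line is weakly increasing in each of $x_{k-1}^{t}$, $x_{k}^{t}$, $x_{K-1}^{t}$ separately, using exactly the facts you isolate, namely $x_{k}^{t} \geq x_{k-1}^{t}$ and $(1-\varepsilon) x_{K-1}^{t} < 1$. Your write-up is somewhat more careful than the paper's—you explicitly handle the index coincidences $k = K-1$ and $K = 1$ and use convexity of the set of cumulative vectors to justify integrating the nonnegative partials along the segment—points the paper leaves implicit as routine.
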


\begin{proof}[Proof of Lemma \ref{Update Function Monotonicity}]

The right-hand side of the first line in Equation \ref{Updated Governing Dynamics} depends only on the product of $x_{0}^{t}$ and $x_{K - 1}^{t}$, and it is strictly increasing in this product. The right-hand side of the second line in Equation \ref{Updated Governing Dynamics} depends only on $x_{k - 1}^{t}$, $x_{k}^{t}$, and $x_{K - 1}^{t}$, and, holding fixed any two of these variables, it is weakly increasing in the third variable.
\end{proof}

\begin{proof}[Proof of Theorem \ref{convergence theorem}]

We prove the first statement of Theorem \ref{convergence theorem}. A similar argument handles the second statement. Let $(\tilde{x}_{0}^{t},...,\tilde{x}_{K - 1}^{t})$ denote the time-path corresponding to the highest possible initial conditions, i.e. $(\tilde{x}_{0}^{0},...,\tilde{x}_{K - 1}^{0}) = (1,...,1)$. By Lemma \ref{Update Function Monotonicity}, $(\tilde{x}_{0}^{t + 1},...,\tilde{x}_{K - 1}^{t + 1}) \leq (\tilde{x}_{0}^{t},...,\tilde{x}_{K - 1}^{t})$ for all $t$. Thus, it follows that $\lim_{t \rightarrow \infty} (\tilde{x}_{0}^{t},...,\tilde{x}_{K - 1}^{t}) = \inf_{t} (\tilde{x}_{0}^{t},...,\tilde{x}_{K - 1}^{t})$, so in particular $\lim_{t \rightarrow \infty} (\tilde{x}_{0}^{t},...,\tilde{x}_{K - 1}^{t})$ exists. Since the update rules in Equation \ref{Updated Governing Dynamics} are continuous, it follows that $\lim_{t \rightarrow \infty} (\tilde{x}_{0}^{t},...,\tilde{x}_{K - 1}^{t})$ must be a steady state of the system. By Lemma \ref{Update Function Monotonicity} and the fact that $(\overline{x}_{0},...,\overline{x}_{K - 1})$ is the steady state with the highest share of cooperators, it follows that $\lim_{t \rightarrow \infty} (\tilde{x}_{0}^{t},...,\tilde{x}_{K - 1}^{t}) = (\overline{x}_{0},...,\overline{x}_{K - 1})$.

Now, fix some $(x_{0}^{0},...,x_{K - 1}^{0}) \geq (\overline{x}_{0},...,\overline{x}_{K - 1})$. By Lemma \ref{Update Function Monotonicity},
\begin{equation*}
(\overline{x}_{0},...,\overline{x}_{K - 1}) \leq (x_{0}^{t},...,x_{K - 1}^{t}) \leq (\tilde{x}_{0}^{t},...,\tilde{x}_{K - 1}^{t})
\end{equation*}
for all $t$, so it follows that $\lim_{t \rightarrow \infty} (x_{0}^{t},...,x_{K - 1}^{t}) = (\overline{x}_{0},...,\overline{x}_{K - 1})$.
\end{proof}

\section{Evolutionary Analysis}

We have so far analyzed the efficiency of $GrimK$ equilibrium steady states (Theorem \ref{Limit Performance Result}) and convergence to such steady states when all players use the $GrimK$ strategy (Theorem \ref{convergence theorem}). To further examine the robustness of $GrimK$ strategies, we now perform two types of evolutionary analysis. In Section \ref{section SS robustness}, we show that, when $g < l/(1 + l)$, there are sequences of $GrimK$ equilibria that obtain the maximum cooperator share of $l / (1 + l)$ as  $(\gamma,\varepsilon) \rightarrow (1,0)$ that are robust to  invasion by a small mass of mutants who follow any other $GrimK'$ strategy, such as \emph{Always Defect} (i.e., $Grim0$). In Section \ref{section evolutionary dynamics}, we report simulations of the evolutionary dynamic when a $GrimK$ steady state is invaded by mutants playing another $GrimK'$ strategy.

\subsection{Steady-State Robustness} \label{section SS robustness}

We consider the following notion of steady-state robustness.

\begin{definition}
A $GrimK$ equilibrium with share of cooperators $\mu^{C}$ is \textbf{steady-state robust to mutants} if, for every $K' \neq K$ and $\alpha > 0$, there exists some $\overline{\delta} > 0$ such that when the share of players playing $GrimK$ is $1 - \delta$ and the share of players playing $GrimK'$ is $\delta$ with $\delta < \overline{\delta}$, then
\begin{itemize}

\item There is a steady state where the fraction of players playing $GrimK$ that are cooperators, $\tilde{\mu}^{C}$, satisfies $|\tilde{\mu}^{C} - \mu^{C}| < \alpha$, and

\item It is strictly optimal to play $GrimK$.

\end{itemize}
\end{definition}

We show that, whenever strategic complementarities are strong enough to support a cooperative $GrimK$ equilibrium, there is a sequence of $GrimK$ equilibria that are robust to mutants and attains  the maximum cooperation level of $l / (1 + l)$ when expected lifespans are long and noise is small.

\begin{theorem}
Suppose that $g < l / (1 + l)$. There is a family of $GrimK$ equilibria giving a share of cooperators $\mu^{C}(\gamma,\varepsilon)$ for parameters $\gamma,\varepsilon$ such that:
\begin{enumerate}

\item $\lim_{(\gamma,\varepsilon) \rightarrow (1,0)} \mu^{C}(\gamma,\varepsilon) = l / (1 + l)$, and

\item There is some $\overline{\gamma} < 1$ and $\overline{\varepsilon} > 0$ such that, when $\gamma > \overline{\gamma}$ and $\varepsilon < \overline{\varepsilon}$, the $GrimK$ equilibrium with share of cooperators $\overline{\mu}^{C}(\gamma,\varepsilon)$ is steady-state robust to mutants.

\end{enumerate}
\label{Evolutionary Stability Result}
\end{theorem}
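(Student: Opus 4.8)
The plan is to treat the mutant share $\delta$ as a perturbation parameter and show that the maximum-cooperation $GrimK$ equilibrium persists, both as a steady state and as a strict optimum, as $\delta \to 0$. Part 1 is essentially immediate: set $\mu^{C}(\gamma,\varepsilon) := \overline{\mu}^{C}(\gamma,\varepsilon)$. In the regime of interest the incentive bound (\ref{ICD}) reads $\mu^{C} < \tfrac{1}{\gamma(1-\varepsilon)}\tfrac{l}{1+l}$ with a right-hand side below $1$, so only finitely many integers $K$ can support an equilibrium and the supremum defining $\overline{\mu}^{C}$ is attained by a genuine $GrimK$ equilibrium; Theorem \ref{Limit Performance Result} then gives $\overline{\mu}^{C}(\gamma,\varepsilon) \to l/(1+l)$.

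For Part 2, fix $(\gamma,\varepsilon)$ with $\gamma$ near $1$ and $\varepsilon$ near $0$, let $K$ and $\overline{\mu}^{C}$ be the threshold and cooperator share of the maximum-cooperation equilibrium, and fix $K' \neq K$ and $\alpha > 0$. First I would describe the mixed steady state. Writing $a$ for the mass of $GrimK$ players with record below $K$ and $b \in [0,\delta]$ for the mass of $GrimK'$ mutants with record below $K$, a $GrimK$ player's record transitions depend on the population only through the \emph{effective cooperation probability} $p = a + b$, exactly as in Lemma \ref{Steady State Shares Lemma} with $\mu^{C}$ replaced by $p$ and the newborn inflow scaled by $1 - \delta$. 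Hence the $GrimK$ steady-state condition becomes $a = (1-\delta)\big(1 - \beta(\gamma,\varepsilon,p)^{K}\big)$, where $b$ (and thus $p$) is pinned down continuously by the mutants' own steady state given the environment. This is a perturbation of the feasibility equation (\ref{feasible}): the residual $G(a) := \big(1 - \beta(\gamma,\varepsilon,a)^{K}\big) - a$ is displaced by an amount bounded by a constant times $\delta$, since $b \le \delta$.

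The heart of the argument is a transversality/implicit-function step: I would show $G'(\overline{\mu}^{C}) \neq 0$, so that the perturbed equation has a root $a(\delta) \to \overline{\mu}^{C}$, giving a mixed steady state whose $GrimK$ cooperator fraction $\tilde{\mu}^{C} = a(\delta)/(1-\delta)$ lies within $\alpha$ of $\overline{\mu}^{C}$ once $\delta$ is small enough. The key identity is that differentiating (\ref{feasible}) along the feasibility locus gives $G'(\mu^{C}) = \beta^{\tilde{K}} \ln(\beta)\,\tfrac{\partial \tilde{K}}{\partial \mu^{C}}$, so $G'(\mu^{C}) = 0$ iff $\tfrac{\partial \tilde{K}}{\partial \mu^{C}} = 0$ iff $d(\gamma,\varepsilon,\mu^{C}) = 0$ by Lemma \ref{Derivative Result}. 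By Lemma \ref{Zeros Result} the function $d(1,0,\cdot)$ vanishes only at $\mu^{C} = 1 - 1/e$; since $\overline{\mu}^{C} \to l/(1+l) \neq 1 - 1/e$, continuity of $d$ (Lemma \ref{Continuity Result}) gives $d(\gamma,\varepsilon,\overline{\mu}^{C}) \neq 0$, hence $G'(\overline{\mu}^{C}) \neq 0$, for $\gamma$ near $1$ and $\varepsilon$ near $0$. (More carefully one applies the implicit function theorem to the full update map (\ref{Updated Governing Dynamics}), checking that $I - DT$ is invertible; this reduces to the same nondegeneracy of the aggregate cooperator-share coordinate.) This transversality is the main obstacle: at a tangency of $G$ (a fold where two steady states merge) the perturbed steady state need not remain near $\overline{\mu}^{C}$, and the one delicate case is the knife-edge $l = e - 1$, where $l/(1+l) = 1 - 1/e$ coincides with the tangency value and a separate argument is required.

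Finally I would verify that $GrimK$ is \emph{strictly} optimal in this steady state. Because the mutant mass is $\delta$, each flow payoff $\pi_{k}$, and hence each continuation value $V_{k}$ of Lemma \ref{Value Functions Lemma}, differs from its unperturbed value by $O(\delta)$; the two binding incentive comparisons therefore reduce, up to $O(\delta)$, to (\ref{ICC}) (a record-$0$ cooperator strictly prefers $C$ against a cooperating opponent) and (\ref{ICD}) (a record-$(K-1)$ cooperator strictly prefers $D$ against a defecting opponent). Both hold strictly at the maximum-cooperation equilibrium---(\ref{ICD}) strictly because the integer constraint on $K$ keeps $\overline{\mu}^{C}$ bounded away from $\tfrac{1}{\gamma(1-\varepsilon)}\tfrac{l}{1+l}$---so for $\delta$ below a threshold depending on $(\gamma,\varepsilon,K',\alpha)$ they persist strictly. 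The monotonicity of $V_{k}$ in $k$ that makes records $0$ and $K-1$ the binding cases survives the $O(\delta)$ perturbation, the $O(\delta)$ payoff effects of meeting mutants (including $GrimK$ players with records in $[K',K)$ who are exploited by $GrimK'$ opponents) cannot overturn strict comparisons, and $GrimK$ defectors (record $\ge K$) strictly prefer $D$ regardless since records only increase. This gives strict optimality at every record and completes Part 2.
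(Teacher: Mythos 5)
Your core mechanics are the same as the paper's: perturb in the mutant share $\delta$, solve the perturbed steady-state equations by the implicit function theorem, get the required nondegeneracy from $d\neq 0$ via Lemmas \ref{Derivative Result}, \ref{Continuity Result}, and \ref{Zeros Result}, and then let strictness of the unperturbed incentives absorb the $O(\delta)$ payoff perturbations. Indeed, your scalar identity $G'(\mu^{C})=\beta^{\tilde{K}}\ln(\beta)\,\frac{\partial \tilde{K}}{\partial \mu^{C}}$ is exactly the nontrivial diagonal entry $1+K\beta^{K-1}\frac{\partial\beta}{\partial\mu^{C}}$ of the paper's lower-triangular Jacobian in (\ref{Partial Derivatives Matrix}), so your ``reduction to the aggregate coordinate,'' once done carefully on the coupled four-variable system (\ref{Steady-State Implicit Relation}) (your treatment of the mutant masses as exogenously ``pinned down'' is loose, since $b$ depends on $a'$ which depends on $a+b$), is the paper's argument.

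The genuine gap is your choice of anchor. You insist on the exact per-parameter maximizer $\overline{\mu}^{C}(\gamma,\varepsilon)$, which forces the nondegeneracy check at points converging to $l/(1+l)$. When $l=e-1$, that limit equals $1-1/e$, the unique zero of $d(1,0,\cdot)$, so $d$ at the relevant points tends to $0$; the implicit function theorem can then genuinely fail (a fold, at which the perturbed system may have no steady state near $\overline{\mu}^{C}$ for $\delta>0$), and the maximal equilibrium may simply not be steady-state robust there. You flag this case but leave it unresolved, while the theorem still asserts the conclusion whenever $g<l/(1+l)=1-1/e$. The paper never uses the maximal equilibrium: it fixes a target $\tilde{\mu}^{C}\in(g,\,l/(1+l))$ with $\tilde{\mu}^{C}\neq 1-1/e$, uses Lemmas \ref{Equilibrium Characterization} and \ref{Denseness Result} to produce a family of equilibria with shares converging to $\tilde{\mu}^{C}$, and runs the IFT there, where $d$ is bounded away from zero uniformly for $(\gamma,\varepsilon)$ near $(1,0)$; taking targets approaching $l/(1+l)$ while dodging $1-1/e$ (possible in any interval) then yields Part 1. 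So the fix for your gap is exactly this substitution of near-maximal targets for the maximum---your literal reading of ``$\overline{\mu}^{C}(\gamma,\varepsilon)$'' in Part 2 of the statement is more ambitious than what the paper proves (or than what is provable by this method at the knife-edge), and without that substitution your proof is incomplete for $l=e-1$.
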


\begin{proof}
We assume that $K' < K$; the proof for $K' > K$ is analogous. Fix some $g < \tilde{\mu}^{C} < l / (1 + l)$ satisfying $\tilde{\mu}^{C} \neq 1 - 1/e$. By Lemmas \ref{Equilibrium Characterization} and \ref{Denseness Result}, we know that there exists some family of tolerant trigger strategy equilibria with share of cooperators $\tilde{\mu}^{C}(\gamma,\varepsilon)$ such that $\lim_{(\gamma,\varepsilon) \rightarrow (1,0)} \tilde{\mu}^{C}(\gamma,\varepsilon) = \tilde{\mu}^{C}$. Fix some $\gamma,\varepsilon$, and consider the modified environment where share $1 - \delta$ of the players use the $GrimK$ strategy corresponding to $\tilde{\mu}^{C}(\gamma,\varepsilon)$ and share $\delta$ of the players use some other $GrimK'$.

 Let  $\mu^{K}_{K}$ denote the share of the players playing $GrimK$ that have record less than $K$, let $\mu^{K}_{K'}$ be the share of $GrimK$ players with record less than $K'$, and let $\mu^{K'}_{K'}$ be the share of the players playing $GrimK'$ that have record less than $K'$. Then in an steady state we have

\begin{equation*}
\begin{split}
\mu^{K}_{K} & = 1 - \beta(\gamma, \varepsilon, (1 - \delta) \mu^{K}_{K} + \delta \mu^{K'}_{K})^{K} , \\
\mu^{K}_{K'} & = 1 - \beta(\gamma, \varepsilon, (1 - \delta) \mu^{K}_{K} + \delta \mu^{K'}_{K})^{K'} , \\
\mu^{K'}_{K} & = 1 - \gamma^{K - K'} \beta(\gamma, \varepsilon, (1 - \delta) \mu^{K}_{K'} + \delta \mu^{K'}_{K'})^{K'} , \\
\mu^{K'}_{K'} & = 1 - \beta(\gamma, \varepsilon, (1 - \delta) \mu^{K}_{K'} + \delta \mu^{K'}_{K'})^{K'}  .
\end{split}
\end{equation*}
This can be rewritten as
\begin{equation}
\begin{split}
f^{K}_{K}(\gamma,\varepsilon,\mu^{K}_{K},\mu^{K}_{K'},\mu^{K'}_{K},\mu^{K'}_{K'}) & := \mu^{K}_{K} + \beta(\gamma, \varepsilon, (1 - \delta) \mu^{K}_{K} + \delta \mu^{K'}_{K})^{K} - 1 = 0 , \\
f^{K}_{K'}(\gamma,\varepsilon,\mu^{K}_{K},\mu^{K}_{K'},\mu^{K'}_{K},\mu^{K'}_{K'}) & := \mu^{K}_{K'} + \beta(\gamma, \varepsilon, (1 - \delta) \mu^{K}_{K} + \delta \mu^{K'}_{K})^{K'} - 1 = 0 , \\
f^{K'}_{K}(\gamma,\varepsilon,\mu^{K}_{K},\mu^{K}_{K'},\mu^{K'}_{K},\mu^{K'}_{K'}) & := \mu^{K'}_{K} + \gamma^{K - K'} \beta(\gamma, \varepsilon, (1 - \delta) \mu^{K}_{K'} + \delta \mu^{K'}_{K'})^{K'} - 1 = 0 , \\
f^{K'}_{K'}(\gamma,\varepsilon,\mu^{K}_{K},\mu^{K}_{K'},\mu^{K'}_{K},\mu^{K'}_{K'}) & := \mu^{K'}_{K'} + \beta(\gamma, \varepsilon, (1 - \delta) \mu^{K}_{K'} + \delta \mu^{K'}_{K'})^{K'} - 1 = 0 .
\end{split}
\label{Steady-State Implicit Relation}
\end{equation}

Note that $\mu^{K}_{K} = \tilde{\mu}^{C}(\gamma,\varepsilon)$, $\mu^{K}_{K'} = 1 - \beta(\gamma,\varepsilon,\tilde{\mu}^{C}(\gamma,\varepsilon))^{K'}$, $\mu^{K'}_{K} = 1 - \gamma^{K - K'} \beta(\gamma,\varepsilon,1 - \beta(\gamma,\varepsilon,\tilde{\mu}^{C}(\gamma,\varepsilon))^{K'})^{K'}$, $\mu^{K'}_{K'} = 1 - \beta(\gamma,\varepsilon,1 - \beta(\gamma,\varepsilon,\tilde{\mu}^{C}(\gamma,\varepsilon))^{K'})^{K'}$ solves (\ref{Steady-State Implicit Relation}) when $\delta = 0$. The partial derivatives of the left-hand side of (\ref{Steady-State Implicit Relation}) evaluated at this point are given by
\begin{equation}
\begin{split}
& \begin{bmatrix}
\frac{\partial f^{K}_{K}}{\partial \mu^{K}_{K}} & \frac{\partial f^{K}_{K}}{\partial \mu^{K}_{K'}} & \frac{\partial f^{K}_{K}}{\partial \mu^{K'}_{K}} & \frac{\partial f^{K}_{K}}{\partial \mu^{K'}_{K'}} \\
\frac{\partial f^{K}_{K'}}{\partial \mu^{K}_{K}} & \frac{\partial f^{K}_{K'}}{\partial \mu^{K}_{K'}} & \frac{\partial f^{K}_{K'}}{\partial \mu^{K'}_{K}} & \frac{\partial f^{K}_{K'}}{\partial \mu^{K'}_{K'}} \\
\frac{\partial f^{K'}_{K}}{\partial \mu^{K}_{K}} & \frac{\partial f^{K'}_{K}}{\partial \mu^{K}_{K'}} & \frac{\partial f^{K'}_{K}}{\partial \mu^{K'}_{K}} & \frac{\partial f^{K'}_{K}}{\partial \mu^{K'}_{K'}} \\
\frac{\partial f^{K'}_{K'}}{\partial \mu^{K}_{K}} & \frac{\partial f^{K'}_{K'}}{\partial \mu^{K}_{K'}} & \frac{\partial f^{K'}_{K'}}{\partial \mu^{K'}_{K}} & \frac{\partial f^{K'}_{K'}}{\partial \mu^{K'}_{K'}} \\
\end{bmatrix} \\
= &
\begin{bmatrix}
1 + K \beta^{K - 1} \frac{\partial \beta}{\partial \mu^{C}} & 0 & 0 & 0 \\
K' \beta^{K' - 1} \frac{\partial \beta}{\partial \mu^{C}} & 1 & 0 & 0 \\
0 & \gamma^{K - K'} K' \beta^{K' - 1} \frac{\partial \beta}{\partial \mu^{C}} & 1 & 0 \\
0 & K' \beta^{K' - 1} \frac{\partial \beta}{\partial \mu^{C}} & 0 & 1
\end{bmatrix} .
\end{split}
\label{Partial Derivatives Matrix}
\end{equation}

Because $\tilde{\mu}^{C}(\gamma,\varepsilon) = 1 - \beta(\gamma,\varepsilon,\mu^{C}(\gamma,\varepsilon))^{K}$ and $K = \ln(1 - \tilde{\mu}^{C}(\gamma,\varepsilon)) / \ln(\beta(\gamma,\varepsilon,\tilde{\mu}^{C}(\gamma,\varepsilon)))$,
\begin{equation*}
\begin{split}
& 1 + K \beta(\gamma,\varepsilon,\tilde{\mu}^{C}(\gamma,\varepsilon))^{K - 1} \frac{\partial \beta}{\partial \mu^{C}}(\gamma,\varepsilon,\tilde{\mu}^{C}(\gamma,\varepsilon)) \\
= & 1 + \ln(1 - \tilde{\mu}^{C}(\gamma,\varepsilon))(1 - \tilde{\mu}^{C}(\gamma,\varepsilon)) \frac{\frac{\partial \beta}{\partial \mu^{C}}(\gamma,\varepsilon,\tilde{\mu}^{C}(\gamma,\varepsilon))}{\beta(\gamma,\varepsilon,\tilde{\mu}^{C}(\gamma,\varepsilon)) \ln(\beta(\gamma,\varepsilon,\tilde{\mu}^{C}(\gamma,\varepsilon)))} .
\end{split}
\end{equation*}

Recall that
\begin{equation*}
\beta(\gamma,\varepsilon,\mu^{C}) = \frac{\gamma (1 - (1 - \varepsilon) \mu^{C})}{1 - \gamma (1 - \varepsilon) \mu^{C}} = 1 - \frac{1 - \gamma}{1 - \gamma (1 - \varepsilon) \mu^{C}} .
\end{equation*}
Thus, $\lim_{(\gamma,\varepsilon) \rightarrow (1,0)} \beta(\gamma,\varepsilon,\tilde{\mu}^{C}(\gamma,\varepsilon)) = 1$. Hence, it follows that for high $\gamma$ and small $\varepsilon$, $\ln(\beta(\gamma,\varepsilon,\tilde{\mu}^{C}(\gamma,\varepsilon))) = - (1 - \beta(\gamma,\varepsilon,\tilde{\mu}^{C}(\gamma,\varepsilon))) + O(1 - \beta(\gamma,\varepsilon,\tilde{\mu}^{C}(\gamma,\varepsilon))^{2}$. Moreover,
\begin{equation*}
\begin{split}
\frac{\partial \beta}{\partial \mu^{C}}(\gamma,\varepsilon,\tilde{\mu}^{C}(\gamma,\varepsilon)) & = - \frac{(1 - \gamma) \gamma (1 - \varepsilon)}{(1 - \gamma (1 - \varepsilon) \mu^{C}(\gamma,\varepsilon))^{2}} \\
& = - \frac{\gamma (1 - \varepsilon)}{1 - \gamma (1 - \varepsilon) \tilde{\mu}^{C}(\gamma,\varepsilon)} (1 - \beta(\gamma,\varepsilon,\tilde{\mu}^{C}(\gamma,\varepsilon))) .
\end{split}
\end{equation*}
Combining these results gives us
\begin{equation*}
\lim_{(\gamma,\varepsilon) \rightarrow (1,0)} \frac{\frac{\partial \beta}{\partial \mu^{C}}(\gamma,\varepsilon,\tilde{\mu}^{C}(\gamma,\varepsilon))}{\beta(\gamma,\varepsilon,\tilde{\mu}^{C}(\gamma,\varepsilon)) \ln(\beta(\gamma,\varepsilon,\tilde{\mu}^{C}(\gamma,\varepsilon)))} = \frac{1}{1 - \tilde{\mu}^{C}} .
\end{equation*}
Since $\lim_{(\gamma,\varepsilon) \rightarrow (1,0)} \ln(1 - \tilde{\mu}^{C}(\gamma,\varepsilon))(1 - \tilde{\mu}^{C}(\gamma,\varepsilon)) = \ln(1 - \tilde{\mu}^{C}) (1 - \tilde{\mu}^{C})$, it further follows that
\begin{equation}
\lim_{(\gamma,\varepsilon) \rightarrow (1,0)} 1 + K \beta(\gamma,\varepsilon,\tilde{\mu}^{C}(\gamma,\varepsilon))^{K - 1} \frac{\partial \beta}{\partial \mu^{C}}(\gamma,\varepsilon,\tilde{\mu}^{C}(\gamma,\varepsilon))
= 1 + \ln(1 - \tilde{\mu}) .
\label{Limit Derivative Result}
\end{equation}

Since $\tilde{\mu} \neq 1 - 1/e$, we have $1 + \ln(1 - \tilde{\mu}) \neq 0$. Thus, using (\ref{Limit Derivative Result}), we conclude that the determinant of the matrix of partial derivatives in (\ref{Partial Derivatives Matrix}) is non-zero, and so can appeal to the implicit function theorem to conclude that for sufficiently high $\gamma$ and small $\varepsilon$, for each $K' \neq K$ and $\alpha > 0$, there is some $\delta_{1} > 0$ such that when the share of players playing $GrimK$ is $1 - \delta$ and the share of players playing $GrimK'$ is $\delta$ with $\delta < \delta_{1}$, there is a steady state where the fraction of players using $GrimK$ that are cooperators, $\mu^{C'}$, is such that $|\mu^{C'} - \tilde{\mu}^{C}(\gamma,\varepsilon)| < \alpha$. Additionally, because the $GrimK$ equilibrium with share of cooperators $\tilde{\mu}^{C}(\gamma,\varepsilon)$ is a strict equilibrium where players have uniformly strict incentives to play according to $GrimK$ at every own record and partner record, it follows that there is some $0 < \overline{\delta} < \delta_{1}$ such that, when the share of players playing $GrimK$ is $1 - \delta$ and the share of players playing $GrimK'$ is $\delta$ with $\delta < \overline{\delta}$, there is a steady state with share of cooperators $\mu^{C'}$ such that $|\mu^{C'} - \tilde{\mu}^{C}(\gamma,\varepsilon)| < \alpha$ where it is strictly optimal to play $GrimK$.

\end{proof}

\subsection{Dynamics} \label{section evolutionary dynamics}

We performed a simulation to capture dynamic evolution. We considered a population initially playing the $Grim5$ equilibrium with steady-state share of cooperators of $\mu^{C} \approx 0.8998$ when $\gamma = 0.9, \varepsilon = 0.1, g = 0.4, l = 2.8$ that is infected with a mutant population playing $Grim1$ at $t = 0$. The initial share of the population that played $Grim5$ was $.95$, and the complementary share of $0.05$ played $Grim1$. At $t = 0$, all of the $Grim1$ mutants had record $0$, while the record shares of the $Grim5$ population were proportional to those in the original steady state. At period $t$, the players match, observe each others' records (but not what population their opponent belongs to), and then play as their strategy dictates. We denote the average payoff of the $Grim5$ players and $Grim1$ players at period $t$ by $\pi^{Grim5,t}$ and $\pi^{Grim1,t}$, respectively. 

The evolution of the system from period $t - 1$ to $t$ was driven by the average payoffs and sizes of the two  populations at $t - 1$. In particular, at any period $t > 0$, the share of the newborn players that belonged to the $Grim5$ population ($\mu^{NGrim5,t}$) was proportional to the product of $\mu^{Grim5,t-1}$ and $\pi^{Grim5,t-1}$, and similarly the share of the $1 - \gamma$ newborn players that belonged to the $Grim1$ population ($\mu^{NGrim1,t}$) was proportional to the product of $\mu^{Grim1,t-1}$ and $\pi^{Grim1,t-1}$. Formally,
\begin{equation*}
\begin{split}
\mu^{NGrim5,t} & = \frac{\mu^{Grim5,t-1} \pi^{Grim5,t-1}}{\mu^{Grim5,t-1} \pi^{Grim5,t-1} + \mu^{Grim1,t-1} \pi^{Grim1,t-1}} (1 - \gamma) \\
\mu^{NGrim1,t} & = \frac{\mu^{Grim1,t-1} \pi^{Grim1,t-1}}{\mu^{Grim5,t-1} \pi^{Grim5,t-1} + \mu^{Grim1,t-1} \pi^{Grim1,t-1}} (1 - \gamma) .
\end{split}
\end{equation*}

\textbf{Extended Data Fig. \ref{Dynamics Figure}} presents the results of this simulation. \textbf{Extended Data Fig. \ref{Dynamics Figure}a} depicts the evolution of the share of players that use $Grim5$ and are cooperators (i.e. have record $k < 5$). Initially, this share is below the steady-state value of $\approx 0.8998$,  and is decreasing as the $Grim1$ mutants obtain high payoffs relative to the normal $Grim5$ players on average. However, the share of cooperator $Grim5$ players eventually begins to increase and approaches its steady-state value as the mutants die out.

The reason the mutants eventually die out is that their payoffs  eventually decline, as depicted in \textbf{Extended Data Fig. \ref{Dynamics Figure}b}. The tendency of the $Grim1$ players to defect means that they tend to move to high records relatively quickly, and so while they initially receive a high payoff from defecting against cooperators,  this advantage is short lived.

We found similar results when the mutant population plays $Grim9$ rather than $Grim1$, although the average payoff in the mutant population never exceeded that in the normal population. And we again found similar results when a population initially playing the $Grim8$ equilibrium with steady-state share of cooperators of $\mu^{C} \approx 0.613315$ and $\gamma = 0.95, \varepsilon = 0.05, g = 0.5, l = 4$ is infected with a mutant population playing $Grim3$ at $t = 0$, and for when it is infected with a mutant population playing $Grim13$.

\section{Public Goods}

Our analysis so far has taken the basic unit of social interaction to be the standard 2-player prisoner's dilemma. However, there are important social interactions that involve many players: the management of the commons and other public resources is a leading example [\citenum{hardin1968tragedy, dawes1980social, berkes1989benefits, ostrom1990governing}]. Such multiplayer public goods games have been the subject of extensive theoretical and experimental research [\citenum{milinski2002reputation, semmann2004strategic, ledyard1995handbook, fehr2000cooperation, olson2009logic, bergstrom1986private}]. Here we show that a simple variant of $GrimK$ strategies can support positive robust cooperation in the multiplayer public goods game when there is sufficient strategic complementarity.

We use the same model as considered so far, except that now in each period the players randomly match in groups of size $n$, for some fixed integer $n\geq 2$. All players in each group simultaneously decide whether to \emph{Contribute} ($C$) or \emph{Not Contribute} ($D$). If exactly $x$ of the $n$ players in the group contribute, each group member receives a benefit of $f(x)\geq0$, where $f:\mathbb{N} \to \mathbb{R}_{+}$ is a strictly increasing function with $f(0) = 0$. In addition, each player who contributes incurs a private cost of $c>0$. This coincides with the 2-player PD when $n = 2$, $f(1) = 1 + g$, $f(2) = l + 2 + g$, and $c = l + 1 + g$.

For each $x\in \{0,\ldots ,n-1\}$, let $\Delta(x)=f(x+1)-f(x)$ denote the marginal benefit to each member when there is an additional contribution. Assume that $\Delta(x)<c<n \Delta(x)$ for each $x\in \{0,\ldots ,n-1\}$. This assumption makes the public good game an $n$-player PD, in that $D$ is the selfishly optimal action while everyone playing $C$ is socially optimal.

We consider the same record system as in the 2-player PD: Newborns have record 0. If a player plays $D$, their record increases by 1. If a player plays $C$, their record increases by 1 with probability $\varepsilon >0$, and remains constant with probability $1- \varepsilon$.

As in the 2-player PD, we find that a key determinant of the prospects for robust cooperation is the degree of strategic complementarity or substitutability in the social dilemma. In the public good game, we say that the interaction exhibits \emph{strategic complementarity} if $\Delta(x)$ is increasing in $x$ (i.e., contributing is more valuable when more partners contribute), and exhibits \emph{strategic substitutability} if $\Delta(x)$ is decreasing in $x$.

We first show that with strategic substitutability the unique strict equilibrium is \emph{Never Contribute}. This generalizes our finding that \emph{Always Defect} is the unique strict equilibrium in the 2-player PD when $g \geq l$.

\begin{theorem}
For any $n\geq 2$, if the public good game exhibits strategic substitutability, the unique strict equilibrium is \emph{Never Contribute}.
\end{theorem}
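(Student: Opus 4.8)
The plan is to dispose of the easy direction first and then prove uniqueness through a global consistency argument that converts strategic substitutability into a contradiction. For the easy direction, note that under \emph{Never Contribute} no player's record ever affects how anyone behaves, so all continuation values coincide; the myopic comparison is then decisive, and since $\Delta(x)<c$ for every $x$ the action $D$ is strictly optimal at every own record against every group. Hence \emph{Never Contribute} is a strict equilibrium, and the whole content of the theorem is uniqueness.

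The first real step is to collapse the equilibrium incentive condition into a single scalar comparison. Writing $\rho_{k} := \gamma(1-\varepsilon)(V_{k}-V_{k+1})/(1-\gamma)$ for the \emph{reputational value} of a one-step-better record, the equilibrium definition gives that a record-$k$ player in a group where exactly $x$ of the other $n-1$ members contribute strictly prefers $C$ iff $\rho_{k}>c-\Delta(x)$ and strictly prefers $D$ iff $\rho_{k}<c-\Delta(x)$, since $C$ changes the current payoff by $\Delta(x)-c$ and the continuation by $(1-\varepsilon)(V_{k}-V_{k+1})$. Strategic substitutability means $\Delta(x)$ is decreasing in $x$, so the temptation $c-\Delta(x)$ is \emph{increasing} in $x$: against a group the best response is a cutoff that contributes only when sufficiently few partners contribute. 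In particular, any contribution anywhere forces $\rho_{k}>c-\Delta(x)\geq c-\Delta(0)>0$, so at any record where contribution occurs a strictly better record must be strictly valuable, i.e. $V_{k}>V_{k+1}$.

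The crux is to show that, under substitutability, this is impossible. A better record can raise one's continuation value only by eliciting weakly more contributions from future partners; but by the cutoff structure a partner contributes only when few of the group's \emph{other} members contribute, so improving one's own record (which makes one contribute in more configurations) raises the contributor count each partner sees and therefore weakly pushes partners from contributing toward free-riding. Decreasing marginal benefit makes contributions substitutes that crowd one another out, so a better record elicits no extra cooperation and may even invite less; hence $\rho_{k}\leq 0$, contradicting the strict preference to contribute. The two-player case is the clean illustration: the only candidate cooperative pattern under $g>l$ is ``cooperate against a defector, defect against a cooperator,'' which needs $l<\rho_{k}<g$, yet a partner who observes one's good record then defects, so a better record strictly \emph{lowers} one's payoff and $\rho_{k}\leq 0$ follows. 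To make this rigorous in the $n$-player game I would build a monotone coupling of the steady-state record dynamics comparing two players who differ by a single record — in the spirit of the update-map monotonicity in Lemma~\ref{Update Function Monotonicity} — and run a value-iteration/contraction argument to sign $V_{k}-V_{k+1}$, using strictness of the equilibrium to exclude the boundary case $\rho_{k}=0$.

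The main obstacle is exactly this signing step. Continuation values are a fixed point of an operator in which a player's own record feeds back, through the cutoff behavior, into how each of several partners simultaneously treats them, and this feedback must be shown non-positive for \emph{arbitrary} strategies, not merely cutoff or $GrimK$ strategies. The $n$-player structure is what makes the sign analysis delicate, since one contribution decision alters the contributor count seen by every other group member at once; carrying out the coupling carefully, while also arguing that the relevant comparison steady states exist, is where the genuine work of the proof lies.
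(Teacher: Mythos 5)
Your setup is sound as far as it goes: \emph{Never Contribute} is indeed a strict equilibrium since $\Delta(x)<c$; the incentive condition does reduce to comparing the reputational value $\rho_{k}$ with the temptation $c-\Delta(x)$; and under substitutability any instance of contribution forces $\rho_{k}>c-\Delta(0)>0$, i.e.\ $V_{k}>V_{k+1}$. But the entire content of the theorem lies in the step you leave open --- showing that $V_{k}>V_{k+1}$ is impossible --- and your sketch for it (a monotone coupling plus value iteration to sign $V_{k}-V_{k+1}$) is not an argument but a research program. Its central claim, that a better record elicits weakly less cooperation from partners, is never established and is circular as stated: partners' actions in a group are determined by a simultaneous-move fixed point over the whole record profile, and how that fixed point varies with one player's record depends on everyone's $\rho$ values, which are exactly the objects you are trying to sign. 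You acknowledge this yourself (``where the genuine work of the proof lies''), so by your own account the proof is incomplete.

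The paper closes this gap with a short symmetry argument that your proposal misses, and which makes any coupling machinery unnecessary. Consider a group in which all $n$ members have the same record $k$. Since all players use the same strategy, either all of them contribute or none do. If all contribute, then contributing is optimal against $x=n-1$ contributing partners, which under substitutability is the configuration with the largest temptation $c-\Delta(x)$; hence contributing is optimal against every $x$, so record-$k$ players contribute regardless of partners' records. If none contribute, the same logic run from $x=0$ (the smallest temptation) shows record-$k$ players never contribute, again regardless of partners' records. Either way, behavior at every record is independent of partners' records, so a player's own record has no effect on how current or future partners treat them; the distribution of opposing actions a player faces is record-independent, hence $V_{k}$ is constant in $k$ and $\rho_{k}=0$. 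Since $c-\Delta(x)>0$, defection is then strictly optimal at every record and against every group, which simultaneously rules out the all-contribute case and delivers uniqueness. Note how this sidesteps exactly the fixed-point comparison you struggle with: rather than comparing equilibrium play across different record profiles, it reads behavior off symmetric profiles, where the symmetry of the common strategy does all the work, and lets substitutability propagate that behavior to all profiles.
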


\begin{proof}
Suppose $n$ players who all have the same record $k$ meet. By symmetry, either they all contribute or none of them contribute. In the former case, contributing is optimal for a record-$k$ player when all partners contribute, so by strategic substitutability contributing is also optimal for a record-$k$ player when a smaller number of partners contribute. Thus, a record-$k$ player contributes regardless of their partners' records. In the latter case, not contributing is optimal for a record-$k$ player when no partners contribute, so by strategic substitutability not contributing is also optimal for a record-$k$ player when a larger number of partners contribute. 

We have established that, for each $k$, record-$k$ players do not condition their behavior on their opponents' records. Hence, the distribution of future opposing actions faced by any player is independent of their record. This implies that not contributing is always optimal.
\end{proof}

We now turn to the case of strategic complementarity and consider the following simple generalization of $GrimK$ strategies: Records $k<K$ are considered to be ``good,'' while records $k\geq K$ are considered ``bad.'' When $n$ players meet, they all contribute if all of their records are good; otherwise, none of them contribute.

For $GrimK$ strategies to form an equilibrium, two incentive constraints must be satisfied: First, a player with record $0$ (the ``safest" good record) must want to contribute in a group with $n-1$ other good-record players. Second, a player with record $K-1$ (the ``most fragile" good record) must not want to contribute in a group where no one else contributes.

We let $g = c - \Delta(n - 1)$ and $l = c - \Delta(0)$. Note that
\begin{equation*}
V_{0} = (1 - \gamma) (\mu^{C})^{n - 1} (f(n) - c) + \gamma (1 - \varepsilon) (\mu^{C})^{n - 1} V_{0} + \gamma (1 - (1 - \varepsilon) (\mu^{C})^{n - 1}) V_{1} ,
\end{equation*}
which gives
\begin{equation*}
(1 - \varepsilon) \frac{\gamma}{1 - \gamma} (V_{0} - V_{1}) = \frac{1 - \varepsilon}{1 - (1 - \varepsilon) (\mu^{C})^{n - 1}} ((\mu^{C})^{n - 1} (f(n) - c) - V_{0}) .
\end{equation*}
By a similar argument to Lemma \ref{Value Functions Lemma}, it can be established that $V_{0} = \mu^{C} (\mu^{C})^{n - 1} (f(n) - c)$. We thus find that the cooperation constraint for a record $0$ player is
\begin{equation}
\frac{1 - \varepsilon}{1 - (1 - \varepsilon) (\mu^{C})^{n - 1}} (1 - \mu^{C}) (\mu^{C})^{n - 1} (f(n) - c) > g .
\label{C|C}
\end{equation}

In addition,
\begin{equation*}
V_{K - 1} = (1 - \gamma) (\mu^{C})^{n - 1} (f(n) - c) + \gamma (1 - \varepsilon) (\mu^{C})^{n - 1} V_{K - 1}
\end{equation*}
gives
\begin{equation*}
(1 - \varepsilon) \frac{\gamma}{1 - \gamma} V_{K - 1} = \frac{\gamma (1 - \varepsilon)}{1 - \gamma (1 - \varepsilon) (\mu^{C})^{n - 1}} (\mu^{C})^{n - 1} (f(n) - c) .
\end{equation*}
Thus, the defection constraint for a record $K - 1$ player is
\begin{equation*}
\frac{\gamma (1 - \varepsilon)}{1 - \gamma (1 - \varepsilon) (\mu^{C})^{n - 1}} (\mu^{C})^{n - 1} (f(n) - c) < l ,
\end{equation*}
which gives
\begin{equation}
(\mu^{C})^{n - 1} < \frac{1}{\gamma (1 - \varepsilon)} \frac{l}{f(n) - c + l} \Leftrightarrow \mu^{C} < \left( \frac{1}{\gamma (1 - \varepsilon)} \right)^{\frac{1}{n - 1}} \left( \frac{l}{f(n) - c + l} \right)^{\frac{1}{n - 1}} .
\label{Simplified (D|D)}
\end{equation}
This gives $\mu^{C} \leq (l / (f(n) - c + l))^{1/(n - 1)}$ in the $(\gamma, \varepsilon) \rightarrow (1,0)$ limit.

Moreover, in the limit where $\varepsilon \rightarrow 0$, (\ref{C|C}) gives
\begin{equation*}
\frac{1 - \mu^{C}}{1 - (\mu^{C})^{n - 1}} (\mu^{C})^{n - 1} (f(n) - c) \geq g \Leftrightarrow \frac{1}{\sum_{m = 0}^{n - 2} (\mu^{C})^{m}} (\mu^{C})^{n - 1} (f(n) - c) \geq g .
\end{equation*}
Note that $(\mu^{C})^{n - 1} / \sum_{m = 0}^{n - 2}(\mu^{C})^{m}$ is increasing in $\mu^{C}$. Thus, this inequality, along with the previous upper bound for $\mu^{C}$, puts the following requirement on the parameters:
\begin{equation*}
\frac{1 - ( \frac{l}{f(n) - c + l} )^{\frac{1}{n - 1}}}{\frac{f(n) - c}{f(n) - c + l}} \frac{l}{f(n) - c + l} (f(n) - c) \geq g ,
\end{equation*}
which simplifies to
\begin{equation}
g \leq \left(1 - \left( \frac{l}{f(n) - c + l} \right)^{\frac{1}{n - 1}} \right) l .
\label{Public Good g l Necessary Condition}
\end{equation}

So far we have established (\ref{Public Good g l Necessary Condition}), which is a necessary condition  on the $g,l$ parameters for any cooperation to be sustainable with $GrimK$ strategies in the $(\gamma,\varepsilon) \rightarrow (1,0)$ limit. We can further characterize the maximum limit share of cooperators in $GrimK$ equilibria using very similar arguments as those in Lemmas \ref{Equilibrium Characterization} and \ref{Denseness Result}.
\begin{theorem}
\begin{equation*}
\lim_{(\gamma,\varepsilon) \rightarrow (1,0)} \overline{\mu}^{C}_{n}(\gamma,\varepsilon) =
\begin{cases}
\left( \frac{l}{f(n) - c + l} \right)^{\frac{1}{n - 1}} & \text{if } g < \left(1 - \left( \frac{l}{f(n) - c + l} \right)^{\frac{1}{n - 1}} \right) l \\
0 & \text{if } g > \left(1 - \left( \frac{l}{f(n) - c + l} \right)^{\frac{1}{n - 1}} \right) l
\end{cases} .
\end{equation*}
\label{Public Goods Limit Performance Result}
\end{theorem}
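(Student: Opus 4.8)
The plan is to transcribe the two-player argument (Lemmas~\ref{Equilibrium Characterization} and~\ref{Denseness Result} and Theorem~\ref{Limit Performance Result}), everywhere replacing the single-partner cooperation probability $\mu^{C}$ by the probability $(\mu^{C})^{n-1}$ that all $n-1$ partners of a player have good records. The natural object is the public-goods analog of $\beta$,
\[
\beta_{n}(\gamma,\varepsilon,\mu^{C}) = \frac{\gamma\bigl(1 - (1-\varepsilon)(\mu^{C})^{n-1}\bigr)}{1 - \gamma(1-\varepsilon)(\mu^{C})^{n-1}},
\]
which is exactly $\beta$ evaluated at $(\mu^{C})^{n-1}$ and equals the per-period probability that a good-record player survives and advances one record. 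The inflow--outflow balance of Lemma~\ref{Steady State Shares Lemma} goes through verbatim---a good-record player stays put with probability $\gamma(1-\varepsilon)(\mu^{C})^{n-1}$ and otherwise advances---giving $\mu_{k} = \beta_{n}^{\,k}(1-\beta_{n})$ for $k<K$ and the feasibility condition $\mu^{C} = 1 - \beta_{n}(\gamma,\varepsilon,\mu^{C})^{K}$. The value-function computation of Lemma~\ref{Value Functions Lemma} likewise carries over and shows $V_{k}-V_{k+1}$ is increasing in $k$, so that the record-$0$ player is the good-record player most tempted to defect and the record-$(K-1)$ player the one most tempted to contribute; the two binding incentive inequalities are then precisely~(\ref{C|C}) and~(\ref{Simplified (D|D)}) derived above. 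This yields the public-goods version of Lemma~\ref{Equilibrium Characterization}.

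Given this characterization, the $g > (1-\mu^{\ast})l$ branch is immediate, where I write $\mu^{\ast} := (l/(f(n)-c+l))^{1/(n-1)}$. The defection constraint~(\ref{Simplified (D|D)}) forces $\mu^{C} < (\gamma(1-\varepsilon))^{-1/(n-1)}\mu^{\ast}$, whose right-hand side tends to $\mu^{\ast}$, so $\overline{\mu}^{C}_{n} \le \mu^{\ast}$ in the limit. Combining the two incentive constraints exactly as in the derivation of~(\ref{Public Good g l Necessary Condition}) shows that a positive-cooperation $GrimK$ equilibrium survives the $(\gamma,\varepsilon) \to (1,0)$ limit only if $g \le (1-\mu^{\ast})l$; hence when $g > (1-\mu^{\ast})l$ the sole limiting equilibrium is \emph{Never Contribute} ($K=0$), giving $\lim \overline{\mu}^{C}_{n} = 0$.

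For the $g < (1-\mu^{\ast})l$ branch I would first verify that both incentive constraints are slack for targets just below $\mu^{\ast}$. Since the right-hand side of~(\ref{Simplified (D|D)}) converges to $\mu^{\ast}$, that constraint holds for any fixed $\mu^{C} < \mu^{\ast}$ once $(\gamma,\varepsilon)$ is close enough to $(1,0)$; and the $\varepsilon \to 0$ form of~(\ref{C|C}) has a left-hand side that is continuous in $\mu^{C}$ and equals $(1-\mu^{\ast})l$ at $\mu^{C} = \mu^{\ast}$, which strictly exceeds $g$, so~(\ref{C|C}) holds strictly for $\mu^{C}$ near $\mu^{\ast}$. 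The only obstruction is then feasibility: because $K$ must be an integer, not every target $\mu^{C}$ solves $\mu^{C} = 1 - \beta_{n}^{K}$.

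The main obstacle is thus the denseness step, the analog of Lemma~\ref{Denseness Result}. I would set $\tilde{K}_{n}(\gamma,\varepsilon,\mu^{C}) = \ln(1-\mu^{C})/\ln\beta_{n}(\gamma,\varepsilon,\mu^{C})$ and differentiate as in Lemma~\ref{Derivative Result}, producing a factor $d_{n}$ whose $\gamma = 1$ value is
\[
d_{n}(1,\varepsilon,\mu^{C}) = 1 + \frac{(n-1)(\mu^{C})^{n-2}(1-\mu^{C})\ln(1-\mu^{C})\,(1-\varepsilon)}{1 - (1-\varepsilon)(\mu^{C})^{n-1}},
\]
with further $\varepsilon \to 0$ limit $d_{n}(1,0,\mu^{C}) = 1 + (n-1)(\mu^{C})^{n-2}(1-\mu^{C})\ln(1-\mu^{C}) / (1 - (\mu^{C})^{n-1})$; the continuity argument of Lemma~\ref{Continuity Result} transcribes with $(\mu^{C})^{n-1}$ in place of $\mu^{C}$. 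The substantive replacement for Lemma~\ref{Zeros Result}---where $d(1,0,\mu^{C}) = 1 + \ln(1-\mu^{C})$ had the single zero $1 - 1/e$---is the observation that $d_{n}(1,0,\cdot)$ is real-analytic on $(0,1)$, tends to $1$ as $\mu^{C} \to 0^{+}$ and to $-\infty$ as $\mu^{C} \to 1^{-}$, and therefore has only finitely many zeros. Away from those finitely many zeros the estimate of Lemma~\ref{Denseness Result} gives $|\partial \tilde{K}_{n}/\partial \mu^{C}| \to \infty$ as $(\gamma,\varepsilon) \to (1,0)$ (since $\ln\beta_{n} \to 0$), so on a short interval $\tilde{K}_{n}$ sweeps through a range of length exceeding $1$ and hence attains an integer value, supplying a genuinely feasible $\hat{\mu}^{C}$ within any prescribed $\Delta$ of the target. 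Choosing targets $\tilde{\mu}^{C} \uparrow \mu^{\ast}$ that avoid the finite zero set and combining with the incentive checks above then produces $GrimK$ equilibria with cooperator shares converging to $\mu^{\ast}$, completing the first branch. I expect the delicate parts to be the finiteness of the zero set of $d_{n}(1,0,\cdot)$ and the uniform blow-up of $\partial \tilde{K}_{n}/\partial \mu^{C}$ near the limit; the rest is a routine substitution of $(\mu^{C})^{n-1}$ for $\mu^{C}$ in the two-player proof.
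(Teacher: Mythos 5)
Your proposal is correct and follows essentially the same route as the paper, which itself derives the incentive constraints (\ref{C|C}) and (\ref{Simplified (D|D)}) and the necessary condition (\ref{Public Good g l Necessary Condition}) explicitly and then defers the remainder to ``very similar arguments as those in Lemmas \ref{Equilibrium Characterization} and \ref{Denseness Result},'' i.e.\ precisely the substitution of $(\mu^{C})^{n-1}$ for $\mu^{C}$ that you carry out. Your one genuine addition is the right one: since the explicit zero $1-1/e$ of Lemma \ref{Zeros Result} has no closed-form analogue for $d_{n}(1,0,\cdot)$, replacing that computation with the observation that $d_{n}(1,0,\cdot)$ is real-analytic on $(0,1)$, tends to $1$ at $0^{+}$ and to $-\infty$ at $1^{-}$, and therefore has only finitely many isolated zeros---which the target cooperator shares approaching $\bigl(l/(f(n)-c+l)\bigr)^{1/(n-1)}$ can avoid---is exactly what is needed for the denseness step to go through.
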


Theorem \ref{Public Goods Limit Performance Result} shows that $GrimK$ strategies can support robust social cooperation in the $n$-player public goods game in much the same manner as in the 2-player PD. To see how this result reduces to Theorem \ref{Limit Performance Result} in the 2-player PD, note that $f(2) - c = 1$, so $(l / (f(n) - c + l))^{1/(n - 1)} = l / (1 + l)$ when $n=2$.

\renewcommand{\thesection}{A}

\renewcommand{\thesection}{R}

\end{document}